\crefname{figure}{figure}{figures}
\newcommand{\caphead}[1]{{\bf #1}}
\newcounter{tlc}
\newtheorem{theorem}[tlc]{Theorem}
\newcounter{lemmaN}
\newtheorem{lemma}[lemmaN]{Lemma}
\newtheorem{definition}{Definition}
\newcounter{assc}
\newtheorem{assumption}[assc]{Assumption}
\newcommand{\ket}[1]{\left | #1 \right\rangle}
\newcommand{\ketbra}[2]{|#1\left\rangle\right\langle #2 |}
\newcommand{\id}{\mathbbm{1}}
\DeclareMathOperator{\tr}{Tr}
\newcommand{\Prs}[2]{\mathrm{P}_{#1}\!\left(#2\right)}
\newcommand{\cPr}[2]{\mathrm{P}\!\left(#1 \,\middle|\, #2\right)}
\newcommand{\cPrs}[3]{\mathrm{P}_{\!#1}\!\left(#2 \,\middle|\, #3\right)}
\newcommand{\trans}[0]{^{\rm T}}
\newcommand{\corr}[1]{{C}\!\left(#1\right) }
\newcommand{\reals}{\mathbb{R}}
\newcommand{\nats}{\mathbb{N}}
\newcommand{\ints}{\mathbb{Z}}
\newcommand{\sphere}[1]{{S}^{#1}}
\newcommand{\SO}[1]{\mathrm{SO}\!\left(#1\right)}
\newcommand{\order}[1]{\mathcal{O}\!\left(#1\right)}
\newcommand*{\balancecolsandclearpage}{%
  \close@column@grid
  \clearpage
  \twocolumngrid
}
\newcommand{\IQOQI}{Institute for Quantum Optics and Quantum Information,\\ Austrian Academy of Sciences, Boltzmanngasse 3, A-1090 Vienna, Austria}
\newcommand{\UoW}{Faculty of Physics, University of Vienna, Boltzmanngasse 5, A-1090 Vienna, Austria}
\newcommand{\Peri}{Perimeter Institute for Theoretical Physics, 31 Caroline Street North, Waterloo, ON N2L 2Y5, Canada}
\begin{document}

\title{
Semi-device-independent information processing\\with spatiotemporal degrees of freedom
}

\author{Andrew J.\ P.\ Garner}
\affiliation{\IQOQI}

\author{Marius Krumm}
\affiliation{\IQOQI}
\affiliation{\UoW}

\author{Markus P.\ M\"{u}ller}
\affiliation{\IQOQI}
\affiliation{\Peri}

\begin{abstract}
Nonlocality, as demonstrated by the violation of Bell inequalities, 
 enables device-independent cryptographic tasks that do not require users to trust their apparatus. 
In this article, we consider devices whose inputs are spatiotemporal degrees of freedom, e.g.\ orientations or time durations.
Without assuming the validity of quantum theory, 
 we prove that the devices' statistical response must respect their input's symmetries,
 with profound foundational and technological implications.
We exactly characterize the bipartite binary quantum correlations in terms of local symmetries,
 indicating a fundamental relation between spacetime and quantum theory. 
For Bell experiments characterized by two input angles, 
 we show that the correlations are accounted for by a local hidden variable model if they contain enough noise, 
 but conversely must be nonlocal if they are pure enough. 
This allows us to construct a  ``Bell witness'' that certifies nonlocality with fewer measurements than possible without such spatiotemporal symmetries, suggesting a new class of semi-device-independent protocols for quantum technologies. 
\end{abstract}

\date{February 3, 2020}

\maketitle

\section{Introduction}
\enlargethispage{0.85\baselineskip}
Quantum theory radically challenges our classical intuitions. 
A famous example is provided by the violation of Bell inequalities~\cite{EinsteinPR35,Bell64,ClauserHSH69,AspectGR82,Hensen15,BrunnerCPSW14},
 demonstrating that local hidden variable models are inadequate to account for all observable correlations in quantum theory.
While this so-called {\em nonlocality} was initially of foundational concern,
 it transpires to have a very powerful practical use: it enables {\em device-independent} protocols in quantum information theory (e.g.~\cite{MayersY98,BarrettHK05,ColbeckR12,VaziraniV14}). 
In this paradigm, one can perform certain tasks (e.g.\ cryptography) without trusting one's apparatus, or even necessarily assuming the full formalism of quantum mechanics.
These protocols rely on the readily believable {\em no-signalling} constraint, which forbids the instantaneous transmission of information between sufficiently distant laboratories.
Since this constraint originates in special relativity, it may be thought of as a property of spacetime itself.

A pillar of the device-independent formalism is its abstract {\em black box} description: 
 experimental devices are fully characterized by probability tables of outputs given a supplied input (\cref{fig:BellExpt}a).
In this article, 
 we supplement these inputs with physical structure,
 and adopt a semi-device-independent approach that makes no assumptions about the inner workings of the devices, or the physical theories governing them (i.e.\ quantum or otherwise),
 but assumes that their ensemble statistics can be characterized by a finite number of parameters.
Specifically, we consider when inputs are {\em spatiotemporal} degrees of freedom, e.g.\ some orientation in space or duration of time.
This includes, for example, the bias of a magnetic field, duration of a Rabi pulse, or angle of a polarizer (\cref{fig:BellExpt}b).
Spatiotemporal degrees of freedom bring with them a symmetry structure,
 which can be mathematically described using Lie group theory.

\begin{figure}[tbh]
\includegraphics[width=0.275\textwidth]{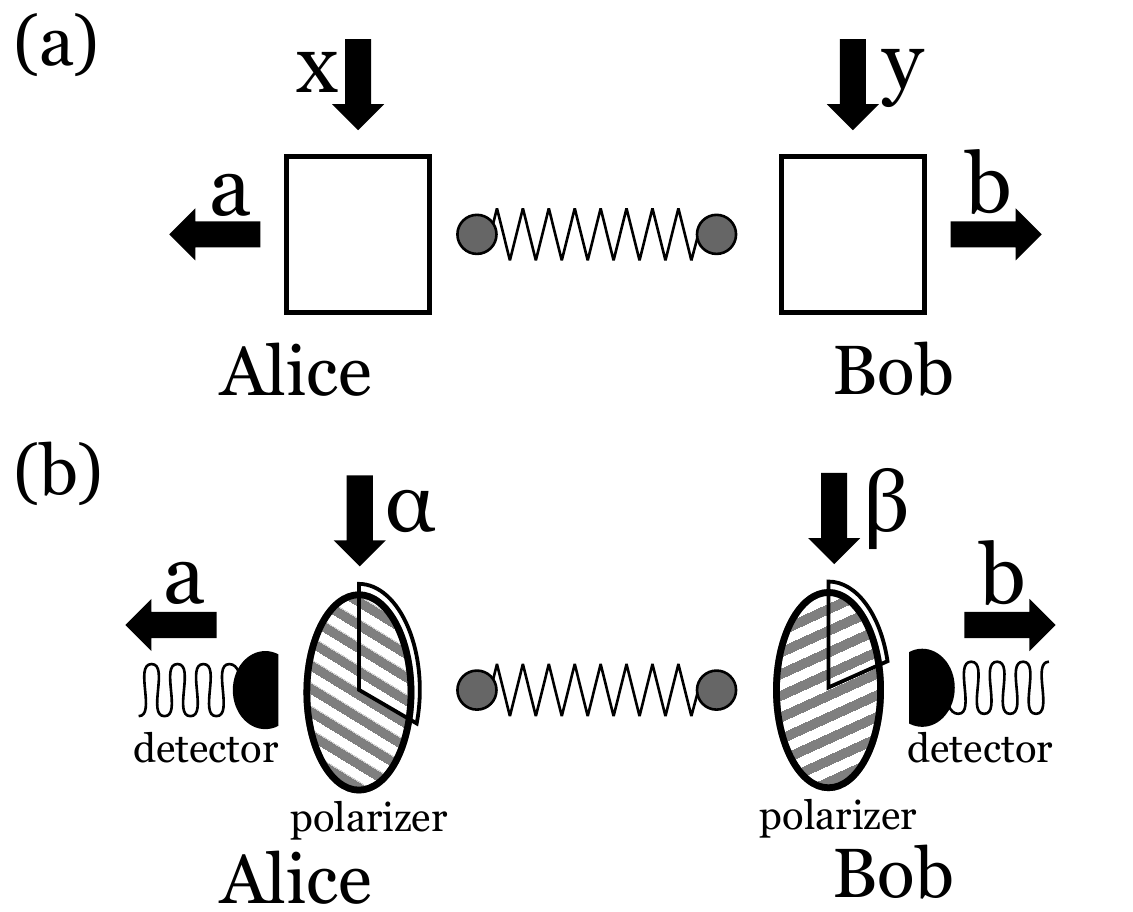}
\caption{
\label{fig:BellExpt}
\caphead{Bell scenario: abstract vs.\ spatiotemporal inputs.} 
Spatially-separate Alice and Bob independently choose measurement settings $x$, $y$ and receive some outputs $a$, $b$, yielding the joint conditional probability distribution $\cPr{a,b}{x,y}$.
(a) In the usual black box formalism, the inputs $x$ and $y$ are abstract labels. 
(b) Here, we consider the physical situation where the inputs are spatiotemporal degrees of freedom (e.g.\ angles $x=\alpha$, $y=\beta$ of polarizers).
}
\vspace*{-1.925em}
\end{figure}

In this article, we introduce a general framework for spatiotemporal black boxes.
We prove that the probability tables associated with spatiotemporal inputs must encode a linear representation of the corresponding symmetry groups (\cref{sec:BlackBox}). 
We demonstrate the
 power of this approach with two examples in Bell test scenarios:
First, if each laboratory controls a single angle (\cref{sec:SO2}), 
 we find---independently of the theory---that the response to rotations can in some cases certify the existence of a local hidden-variable model, or the violation of a Bell inequality. 
Consequently, we present a novel protocol for witnessing nonlocality,
 similar in spirit to \cite{TuraASVLA14,SchmiedBAFSTS16},
 but without prerequiring the validity of quantum theory.
Secondly, we consider when both inputs are chosen via rotations in $d$-dimensional space. 
We show that natural assumptions on the local response to those rotations recovers the set of bipartite binary quantum correlations exactly (\cref{sec:txv}), 
 indicating a fundamental relation between the structures of spacetime and of quantum mechanics. 
Finally, we discuss the implications of these results (\cref{sec:Outlook}), 
 particularly for the construction of novel experimental tests of quantum mechanics and of new semi-device-independent protocols for quantum technologies.

\section{Results}
\label{sec:Results}
\subsection{Representation theorem for spatiotemporal degrees of freedom}
\label{sec:BlackBox}
The device-independent formalism abstracts experiments into a table of output statistics conditional on some choice of input.
This is imbued with {\em causal structure}~\cite{Pearl00} 
 by separating the inputs and outputs into local choices and responses made and observed by different {\em local agents}, acting in potentially different locations and times. 
The simplest structure is one agent at a single point in time.
More commonly considered is the {\em Bell scenario}~\cite{BrunnerCPSW14}, 
 where two spatially separated agents each independently select an input (measurement choice) and record the resulting local output.
Theorem~\ref{thm:Rep} of this paper applies to {\em any} casual structure,
 but looking towards application the later examples will use the Bell scenario.

Here, we shall consider experiments where the local inputs correspond to {\em spatiotemporal} degrees of freedom:
 for example, the direction of inhomogeneity of the magnetic field in a Stern--Gerlach experiment, 
 or the angle of a polarization filter (\cref{fig:BellExpt}b). 
Crucially, we will describe such experiments without assuming the validity of quantum mechanics.
 
Let us first consider a single laboratory, say, Alice's. 
For concreteness, assume for the moment that Alice's input is given by the direction $\vec{x}$ of a magnetic field. 
She chooses her input by applying a rotation $R\in\SO{3}$ to some initial magnetic field direction $\vec{x}_0$, i.e.\ $\vec{x}=R\vec{ x}_0$. 
Her statistics of obtaining any outcome $a$ will now depend on this direction, giving her a {\em black box} $\cPr{a}{\vec{x}}$. 

In general, Alice will have a set of inputs $\mathcal{X}$ and a symmetry group $\mathcal{G}$ that acts on $\mathcal{X}$. 
Given some arbitrary $x_0\in\mathcal{X}$, we assume that Alice can generate every possible input $x\in\mathcal{X}$ by applying a suitable transformation $R\in\mathcal{G}$, such that $x=R x_0$. 
Mathematically, $\mathcal{X}$ is then a \emph{homogeneous space}~\cite{Nakahara03}, which can be written $\mathcal{X}=\mathcal{G}/\mathcal{H}$, where $\mathcal{H}\subseteq\mathcal{G}$ is the subgroup of transformations $R'$ with $R' x_0=x_0$.
In the example above, $\mathcal{G}=\SO{3}$ describes the full set of rotations that Alice can apply to $\vec{x}_0$, 
 while $\mathcal{H}=\SO{2}$ describes the subset of rotations that leave $\vec{x}_0$ invariant (i.e.\ the axial symmetry of the magnetic field vector). 
Then, $\mathcal{X}=\SO{3}/\SO{2}=S^2$ is the $2$-sphere of unit vectors (i.e.\ directions) in 3-dimensional space.
Similarly, the polarizer (\cref{fig:BellExpt}b) corresponds to $\mathcal{G}=\SO{2}$, $\mathcal{H}=\{\id\}$, and $\mathcal{X}=S^1$, which we identify with the unit circle. 

Temporal symmetries also fit into this formalism. 
Suppose Alice's input corresponds to letting her system evolve for some time,
 then $\mathcal{G}=\left(\reals,+\right)$ is the group of \emph{time translations}.
If we know that the system evolves periodically over intervals $\tau\in\reals^+$, which we model as a symmetry subgroup $\mathcal{H}=(\tau\cdot\ints,+)$, then the input domain $\mathcal{X}=\mathcal{G}/\mathcal{H}\simeq S^1$.
Physically, this could correspond to applying a controlled-duration Rabi pulse to an atomic system of trusted periodicity before recording an outcome.

Now suppose Alice has a black box $\mathrm{P}$, where on spatiotemporal input $x\in\mathcal{X}$, the outcome $a$ is observed with probability $\cPr{a}{x}$. 
Then, Alice can ``rotate'' her apparatus by $R\in\mathcal{G}$, 
 and induce a new black box $\mathrm{P}'$  with outcome probabilities $\mathrm{P}'\!\left(a\middle|x\right)=\cPr{a}{\!Rx}$.
Physically, $R$ could be an active rotation within Alice's laboratory (e.g.\ spinning a polarizer), of the incident system (e.g.\ adding a phase plate), or could be a passive change of coordinates.

Thus, a given black box and a spatiotemporal degree of freedom defines a family of black boxes, 
 and transformations $R\in \mathcal{G}$ map a given black box to another one in this family.
Suppose we denote the action of $R$ on the black boxes by $T_R: \mathrm{P} \mapsto \mathrm{P}'$.
If rotating the input first by $R$ then by $R'$ is equivalent to a single rotation $R'' = R' \circ R$,
 it follows the black box formed by applying $T_R$ and then $T_{R'}$ is equivalent to applying the single transformation $T_{R''} = T_{R'} \circ T_{R}$ on $\mathrm{P}$.
We can say more about this action if we consider ensembles of black boxes. 
For any family of black boxes $\{\mathrm{P}_i\}_{i=1}^n$ and probabilities $\{\lambda_i\}_{i=1}^n$, $\sum_i \lambda_i=1$, $\lambda_i\geq 0$, the experiment of first drawing $i$ with probability $\lambda_i$ and then applying black box $\mathrm{P}_i$ defines a new, effective black box $\mathrm{P}$, with statistics $\cPr{a}{x}=\sum_i \lambda_i \cPrs{i}{a}{x}$. 
All these black boxes are in principle operationally accessible to Alice. 
However, a priori, we cannot say much about the resulting set of boxes -- it could be a complicated uncountably-infinite-dimensional set defying simple analysis. 
Thus, we make a minimal assumption that this set is not ``too large'':

\begin{assumption}
\label{ass:Finite}
Ensembles of black boxes can be characterized by a finite number of parameters.
\end{assumption}

The mathematical consequence is that the space of possible boxes for Alice is finite-dimensional. 
This is a weaker abstraction of a stronger assumption typically made in
 the \emph{semi-device-independent} framework of quantum information:
 that the systems involved in the protocols are described by Hilbert spaces of bounded (usually small) dimension~\cite{BrunnerPAGMS08,PawlowskiB11}.
For example, BB84~\cite{BennettB84} quantum cryptography assumes that the information carriers are two-dimensional, 
 excluding additional degrees of freedom that could serve as a side channel for eavesdroppers~\cite{AcinGM06}.
Assumption~\ref{ass:Finite} is much weaker; it does not presume that we have Hilbert spaces in the first place.
It is for this assumption (and not the spatiotemporal structure of the input space) that the results presented in this article lie in the {\em semi}-device-independent regime.

We thus arrive at our first theorem. 
Recall that Alice chooses her input $x_R\in\mathcal{X}$ by selecting some $R \in\mathcal{G}$ and applying it to a default input $x_0$, i.e.\ $x=R x_0$.
Then:
\begin{theorem}
\label{thm:Rep}
There is a representation of the symmetry group $\mathcal{G}$ in terms of real orthogonal matrices $R\mapsto T_R$, such that for each outcome $a$, the outcome probabilities $\cPr{a}{x_R}$ are a fixed (over $R$) linear combination of matrix entries of $T_R$.
\end{theorem}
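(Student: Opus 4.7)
The plan is to linearize the action of the symmetry group $\mathcal{G}$ on black boxes, promote it to a finite-dimensional orthogonal representation, and read off the outcome probabilities as fixed linear combinations of its matrix entries.

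First, I would use Assumption~\ref{ass:Finite} to embed the convex set of Alice's black boxes into a finite-dimensional real vector space $V$ (e.g.\ by taking its affine span and choosing a reference point as origin). The transformation $T_R: \mathrm{P}(a|\cdot) \mapsto \mathrm{P}(a|R\cdot)$ preserves convex mixtures -- rotating the input of a probabilistic mixture yields the same mixture of individually rotated boxes -- so it extends uniquely to a linear map on $V$. The physical composition law $T_{R'}\circ T_R = T_{R'R}$ (stated in the preceding discussion) then promotes $R\mapsto T_R$ to a linear representation of $\mathcal{G}$ on $V$.

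Next, I would make the representation orthogonal. Since probabilities lie in $[0,1]$, the set of black boxes is bounded in $V$; each $T_R$ permutes this set, so the orbit of every vector is bounded, and the finite-dimensionality of $V$ then forces the operator norms $\|T_R\|$ to be uniformly bounded in $R$. The closure $K$ of $\{T_R\}_{R\in\mathcal{G}}$ inside $GL(V)$ is therefore compact, and Haar-averaging any positive-definite inner product over $K$ yields a $K$-invariant inner product with respect to which every $T_R$ is orthogonal. Fixing an orthonormal basis of $V$, I would then observe that $\cPr{a}{Rx_0} = f_a(T_R \mathrm{P})$, where $f_a : Q \mapsto Q(a|x_0)$ is a fixed linear functional; expanding in coordinates gives $\cPr{a}{Rx_0} = \sum_{i,j} c^{(a)}_{ij}(T_R)_{ij}$ with coefficients $c^{(a)}_{ij}$ independent of $R$, as required.

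The main obstacle is the orthogonality step, particularly when $\mathcal{G}$ is non-compact (e.g.\ $\mathcal{G}=(\reals,+)$): one has to justify that the image of $\mathcal{G}$ in $GL(V)$ has compact closure. This rests on combining Assumption~\ref{ass:Finite} with the boundedness of probabilities, and on the finite-dimensional fact that a pointwise-bounded family of linear operators is uniformly bounded, hence precompact. All other steps -- linear extension via preservation of mixtures, the group law, and the final read-off formula -- are essentially formal.
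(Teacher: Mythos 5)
Your proposal is correct and follows essentially the same route as the paper's proof: linearize the $\mathcal{G}$-action on a finite-dimensional span of boxes (using Assumption~(i)), use boundedness of probabilities together with finite dimension to get a uniformly bounded representation, orthogonalize it, and read off $\cPr{a}{Rx_0}$ via the evaluation functional at $x_0$. The only differences are minor: where you Haar-average over the compact closure of $\{T_R\}$ to build an invariant inner product, the paper instead cites the fact that bounded finite-dimensional representations of locally compact groups are unitarizable (your argument is essentially the proof of that fact), and you should work in the \emph{linear} span of the orbit rather than a translated affine span, since $\mathrm{P}\mapsto\mathrm{P}\circ R$ is already linear there, whereas on an affine hull with an arbitrarily chosen origin you would a priori only obtain an affine action.
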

\noindent
The proof is given in \cref{app:GroupRep}, and is based on the observation that $T_R$ becomes a linear group representation on the space of ensembles. Motivated by this characteristic response, we refer to black boxes whose inputs are selected through the action of $\mathcal{G}$ as {\em $\mathcal{G}$--boxes.}

A few comments are in order. 
First, this theorem applies to any causal structure, including the case of two parties performing a Bell experiment. 
If Alice and Bob have inputs and transformations $\mathcal{X}_A,$ $\mathcal{G}_A$ and $\mathcal{X}_B$, $\mathcal{G}_B$ respectively, then the full setup can be seen as an experiment with $\mathcal{X}=\mathcal{X}_A\times\mathcal{X}_B$ and $\mathcal{G}=\mathcal{G}_A\times\mathcal{G}_B$, to which Theorem~\ref{thm:Rep} applies directly.

Secondly, there may be more than one transformation that generates the desired input $x$, 
 i.e.\ both $x=Rx_0$ and $x=R' x_0$ for $R\neq R'$; this is precisely the case if $R^{-1}R'\in\mathcal{H}$. 
For example, a magnetic field can be rotated from the $y$- to $z$-direction in many different ways. 
In this case, Theorem~\ref{thm:Rep} applies to both $R$ and $R'$, which yields additional constraints.

Finally, quantum theory is contained as a special case. 
Typically, one argues that due to preservation of probability, transformations $R$ must be represented in quantum mechanics via unitary matrices $U_R$ acting on density matrices via $\rho \mapsto U_R \rho U_R^\dagger$. 
This projective action can be written as an orthogonal matrix on the real space of Hermitian operators, in concordance with Theorem~\ref{thm:Rep}. 

As a specific example, consider a quantum harmonic oscillator with frequency $\omega$,
 initially in state $\rho_0$,
 left to evolve for a variable time $t$ before it is measured by a fixed POVM~\cite{NielsenC00} $\{M_a\}_{a\in\mathcal{A}}$.
The free dynamics are given by the Hamiltonian $H$,
 whose discrete set of eigenvalues $\{E_n = \hbar\omega\left(\frac{1}{2}\!+\!n\right)\}$ correspond to allowed ``energy levels''.
The evolution is periodic, so (recalling earlier) $\mathcal{G} = \left(\reals,+\right)$, $\mathcal{H} = \left(\frac{2\pi}{\omega}\cdot\ints,+\right)$ and $\mathcal{X}\simeq\sphere{1}$.
The associated black box is thus 
 $\cPr{a}{t} = \tr\left[M_a \exp\left(-\frac{iHt}{\hbar}\right) \rho_0 \exp\!\left(\frac{iHt}{\hbar}\right)\right]$. 
For any given $\rho_0$ and $M_a$, this evaluates to
 an affine-linear combination of terms of the form $\cos\left[\left(n-m\right)\hbar\omega t\right]$ and $\sin\left[\left(n-m\right)\hbar\omega t\right]$, 
 involving all pairs of energy levels that have non-zero occupation probability in $\rho_0$ (and non-zero support in $M_a$). 
This is a linear combination of entries of the matrix representation
\begin{equation}
\label{eq:timeTrans}
   T_t=\bigoplus_{\alpha=E_n-E_m}\left(\begin{array}{cc} \cos(\alpha t) & \sin(\alpha t) \\ -\sin(\alpha t) & \cos(\alpha t)\end{array}\right),
\end{equation}
in accordance with Theorem~\ref{thm:Rep}. 
For $T_t$ to be a finite matrix, there must only be a finite number of occupied energy differences $E_m-E_n$. 

Here, Assumption~\ref{ass:Finite} is equivalent to an upper (and lower) bound on the system's energy.
In the general framework that does not assume the validity of quantum mechanics (or presuppose trust in our devices, or our assignment of Hamiltonians), 
 we can view Assumption~\ref{ass:Finite} as a natural generalization of this to other symmetry groups and beyond quantum theory. 
By assuming a concrete upper bound on the representation label (such as $\alpha$ in \cref{eq:timeTrans}), we can establish powerful theory- and device-independent consequences for the resulting correlations, as we will now demonstrate by means of several examples.

\subsection{Example: Two angles and Bell witnesses}
\label{sec:SO2}
Let us consider the simplest non-trivial spatiotemporal freedom,
 where Alice and Bob each have the choice of a single continuous angle: respectively $\alpha, \beta \in [0,2\pi)$,
 and each obtain a binary output $a,b\in\{+1,-1\}$.
Physically, this would arise, say, 
 in experiments where a pair of photons is distributed to the two laboratories,
 each of which contains an rotatable polarizer followed by a photodetector (\cref{fig:BellExpt}b).

Due to Theorem~\ref{thm:Rep}, the probabilities $\cPr{a,b}{\alpha,\beta}$ are linear combinations of matrix entries of an orthogonal representation of $\SO{2}\!\times\!\SO{2}$. 
From the classification of these representations (see \cref{app:SO2xSO2}), it follows that all $\SO{2}\!\times\!\SO{2}$-boxes are of the form
\begingroup
\addtolength{\jot}{-1em}
\begin{align}
\label{eq:GenericProb}
\cPr{a,b}{\alpha,\beta} := &\sum_{m=0}^{2J} \sum_{n=-2J}^{2J} c^{ab}_{mn} \cos\left( m \alpha - n \beta\right) \nonumber \\
&\hspace{6em} + s^{ab}_{mn} \sin\left( m \alpha - n \beta\right),
\end{align}
\endgroup
resulting in a correlation function 
\begingroup
\addtolength{\jot}{-0.5em}
\begin{align}
\label{eq:CorrFunc}
C\!\left(\alpha, \beta\right) & :=  \cPr{+1,+1}{\alpha, \beta} + \cPr{-1,-1}{\alpha, \beta}  \nonumber \\[0.5em]
& \qquad  - \cPr{+1,-1}{\alpha,\beta} - \cPr{-1,+1}{\alpha,\beta}  \\ 
& \hspace{-3.5em} = \sum_{m=0}^{2J} \sum_{n=-2J}^{2J} \hspace{-0.25em} C_{mn} \cos\left( m \alpha\!-\!n \beta\right) + S_{mn} \sin\left( m \alpha\!-\!n \beta\right), 
\nonumber\\[-0.75em]
\label{eq:GenericCorr}
\end{align}   
\endgroup
where $J\in\{0,\frac 1 2,1,\frac 3 2,\ldots\}$ is some finite maximum ``spin''.

If Alice and Bob's laboratories are spatially separated, the laws of relativity forbid Alice from sending signals to Bob instantaneously. 
This ``no-signalling'' principle constrains the set of valid joint probability distributions:
 namely Bob's marginal statistics cannot depend on Alice's choice of measurement, and vice versa. 
However, for {\em any} given correlation function of the form \cref{eq:GenericCorr}, there is always at least one set of valid no-signalling probabilities (see \cref{app:AnyCIsNoSig}) --
for example, those where the marginal distributions are ``maximally mixed'' such that independent of $\alpha$, $a$ is $+1$ or $-1$ with equal probability (likewise for $\beta$ and $b$),
 consistent with an observation of \citet{PopescuR94}.

Consider a quantum example: two photons in a Werner state~\citep{Werner89,AugusiakDA14} 
  $\rho_W := p \ketbra{\psi^-}{\psi^-}  + \frac{1}{4} (1\!-\!p) \id_4$
 where $\ket{\psi^-} = \frac{1}{\sqrt{2}} \left( \ket{0}\!\ket{1} - \ket{1}\!\ket{0}\right)$ and $p\in[0,1]$.
Alice and Bob's polarizer/detector setups are described by the observables
 $M_\theta := \left(\begin{smallmatrix} \cos 2\theta & \sin2\theta  \\ \sin2\theta & -\cos2\theta \end{smallmatrix}\right)$
 for orientations $\theta=\alpha,\beta$ respectively.
Then, $\corr{\alpha,\beta} = \tr \left(\rho_W M_\alpha\!\otimes\!M_\beta\right) = -p \cos \left[2\left(\alpha - \beta\right)\right]$.
This fits the form of \cref{eq:GenericCorr} for $J=1$, with $C_{22}=-p$ and all other coefficients as zero.

A paradigmatic question in this setup is whether the statistics can be explained by a {\em local hidden variable} (LHV) model.
Namely,
is there a single random variable  $\lambda$ over some space $\Lambda$
such that $\cPr{a,b}{\alpha,\beta} = \int_{\Lambda} \mathrm d\lambda \, \Prs{\Lambda}{\lambda} \cPrs{A}{a}{\alpha,\lambda} \cPrs{B}{b}{\beta,\lambda}$,
 where $\Prs{\Lambda}{\lambda}$ is a classical probability distribution,
 and $\cPrs{A}{a}{\alpha,\lambda}$ and $\cPrs{B}{b}{\beta,\lambda}$ are respectively Alice and Bob's local response functions (conditioned on their input choices $\alpha$ and $\beta$ and the particular realization of the hidden variable $\lambda$)?
If no LHV model exists, then the scenario is said to be \emph{nonlocal}. 
Famously, Bell's theorem shows that quantum theory admits correlations that are nonlocal in this sense~\cite{EinsteinPR35,Bell64}. 
This follows from the violation of \emph{Bell inequalities} that are satisfied by all distributions with LHV models,
 the archetypical example being the Clauser--Horne--Shimony--Holt (CHSH) inequality~\cite{ClauserHSH69}:
\begin{equation}
\label{eq:CHSH}
\big| \corr{\alpha_1,\beta_2} + \corr{\alpha_3,\beta_2} + \corr{\alpha_3,\beta_4} - \corr{\alpha_1,\beta_4}\big| \leq 2,
\end{equation}
where $\alpha_1,$ $\alpha_3$ are two choices of Alice's angle, and $\beta_2$, $\beta_4$ of Bob's.
Classical systems always satisfy this bound,
 but quantum theory admits states and measurements that violate it.
When working with a continuous parameter, Bell inequalities need not be limited to a subset of angles,
 but can also be formulated as a {\em functional} of the entire correlation function~\cite{Zukowski93,SenDeSZ02}.
 
Not all correlations of the form in~\cref{eq:GenericCorr} are allowed by quantum theory. 
For example, ``science fiction'' polarizers with the correlation function $C(\alpha,\beta)=\frac{2}{7}\cos[3(\alpha-\beta)]-\cos[\alpha-\beta]$ would yield a CHSH value of 3.63, under choices of angles $\alpha_1=1.5$, $\alpha_3=0$, $\beta_2=3.9$ and $\beta_4=2.3$, violating quantum theory's maximum achievable value of $2\sqrt{2}$~\cite{Cirelson80}.

With this general form, we can make broad statements about whether correlations are local or nonlocal.
First, if the correlations are sufficiently ``noisy'', we can systematically construct a LHV model by generalizing a procedure by \citet{Werner89} (see \cref{app:LHV}).
If the only constraint on the correlations is that is has some maximum $J$, then
 the existence of a LHV is guaranteed if the magnitude of angle-dependent changes in $C$ is less than $\gamma_J$ where
\begin{equation} 
\label{eq:WeakBound}
\gamma_J := \sqrt{2} e^{-1} \left[4J\left(2J+1\right)\right]^{-\frac{3}{2}}.
\end{equation}
Subject to extra restrictions that keep the form of $C$ simple, more permissive bounds are also derived.
For instance, if there is only one non-zero coefficient in \cref{eq:GenericCorr}, then $\gamma = \sqrt{2}/\pi \approx 0.4502$. Recall the correlation function for projective measurements on a Werner state, $-p \cos\left[2\left(\alpha-\beta\right)\right]$,
 and identify $\gamma$ with $p$.
In this case, our bound is comparable with that in \citet{HirschQVNB17} of $p \leq 0.6829$.

Conversely, we can give a simple sufficient criterion for nonlocality if we separate the terms in \cref{eq:GenericProb,eq:GenericCorr} into {\em relational} and {\em non-relational} components.
The relational components where $m=n$ account for behaviour that depends only on the difference between the two angles.
Purely relational correlations, i.e.\ ones with $\corr{\alpha,\beta}\!\equiv\!\corr{\alpha\!-\!\beta}$, can be motivated by symmetry (i.e.\ that in the absence of external references, {\em only} the relative angle should have operational meaning).
Here, the $J=\frac{1}{2}$ case contains the bipartite {\em rotational invariant} correlations discussed in \citet{NagataLWZ04}.
Conversely, the correlations resulting from any experiment can be actively made relational as we will describe in more detail below.

If the relational part of a correlation function $C_{\rm rel}$ has an angle difference $\Theta_+$ which results in near perfect \mbox{(anti-)}correlations,
 and another angle difference $\Theta_-$ that does not,
 then one can systematically construct a (Braunstein--Caves~\cite{BraunsteinC90}) Bell inequality that will be violated (see \cref{app:NonLocal}).
Specifically, ``near perfect'' means that for a given $J$, $C_{\rm rel}\left(\Theta_+\right) \geq 1 - \varepsilon_J$ with 
\begin{align}
\label{eq:StrongBound}
\varepsilon_J := - K_J+\sqrt{K_J^2+\frac{\Delta^2}4} = \frac{\Delta^2}{8 K_J}+\order{K_J^{-2}},
\end{align}
where $K_J:=\sqrt{2} \pi^2 J(2J+1)(4J+1)/3$, and $C_{\rm rel}\!\left(\Theta_-\right) \leq 1 - \Delta$ bounds the ``other'' value measured at $\Theta_-$.
(See \cref{app:NonLocal} for proof).

We summarize these results (see also \cref{fig:Curves}):
\begin{theorem}
\label{thm:SO2xSO2}
Consider a two-angle Bell experiment with correlations $C$ in the form of \cref{eq:GenericCorr}, 
 with an upper bound $J$ on the representation labels.
\begin{enumerate}[A.]
\item If $C$ is sufficiently ``noisy'', in the sense that
\begin{equation}
   \max_{\alpha,\beta}|C(\alpha,\beta)-C_{00}|\leq \gamma_J (1-|C_{00}|)
\end{equation}
 with $\gamma_J$ as in \cref{eq:WeakBound},
 then the correlations can always be exactly accounted for by a LHV model.
\item If the relational part of $C$ is sufficiently ``pure'' for some angle $\Theta_+$ (above $1-\varepsilon_J$, as defined in \cref{eq:StrongBound}), but also sufficiently different (below $1-\Delta$) for some other angle $\Theta_-$, then the correlations violate a Bell inequality.
\end{enumerate}
\end{theorem}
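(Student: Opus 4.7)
My plan is to treat Parts A and B separately, in both cases exploiting that \cref{eq:GenericCorr} presents $C$ as a trigonometric polynomial of degree at most $2J$ in each angle. I would handle Part A by adapting Werner's hidden-variable construction to this continuous-angle setting, and Part B by projecting $C$ onto its relational component and feeding the resulting bound into a chained Braunstein--Caves inequality.

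For Part A, write $C = C_{00} + \tilde{C}$, and decompose each Fourier term $\cos(m\alpha - n\beta)$ or $\sin(m\alpha - n\beta)$ via the angle-addition formulas into a sum of products of single-argument functions of $\alpha$ and of $\beta$, each bounded by $1$. Each such product already admits a trivial local-hidden-variable realization of the form $\int d\lambda\, g_A(\alpha,\lambda)\, g_B(\beta,\lambda)$ with $|g_A|,|g_B|\leq 1$. I would then take a convex mixture of these elementary strategies, weighted by the coefficients $C_{mn}, S_{mn}$, and complete it with a uniform ``maximally mixed'' strategy that accounts for the constant $C_{00}$ and donates the depolarization budget. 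A careful accounting of the number of elementary terms (at most $(2J+1)(4J+1)$) together with sharp bounds on the single-argument products should show that the combined response functions remain in $[-1,1]$ precisely when $\|\tilde C\|_\infty \leq \gamma_J\,(1-|C_{00}|)$, with $\gamma_J$ as in \cref{eq:WeakBound}. The correlation-level LHV is then promoted to a full joint probability distribution using the no-signalling freedom already discussed for \cref{app:AnyCIsNoSig}.

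For Part B, note first that the map $C \mapsto C_{\rm rel}(\theta) := \frac{1}{2\pi}\int_0^{2\pi} C(\phi+\theta,\phi)\,d\phi$ is implemented by a local operation (both parties rotating their inputs by a shared random $\phi$), so any LHV for $C$ descends to an LHV for $C_{\rm rel}$, and it suffices to violate a Bell inequality with $C_{\rm rel}$ alone. I would choose $N$ angles on each side so that the $2N-1$ consecutive pairs appearing in the chained Braunstein--Caves inequality share a common angle difference $\Theta_+$, while the single wrap-around pair has difference $\Theta_-$. The chain inequality then reduces to $(2N-1)\,C_{\rm rel}(\Theta_+) - C_{\rm rel}(\Theta_-) \leq 2N-2$, which is violated exactly when $(2N-1)\varepsilon_J < \Delta$. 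The final ingredient is a Bernstein--Markov-type estimate for trigonometric polynomials of degree $\leq 2J$ bounded by $1$: their decay from a near-extremal value is at most quadratic in distance, with curvature controlled by a factor of order $J^2$. Optimizing the integer chain length $N$ against this quadratic constraint yields a quadratic equation for $\varepsilon_J$ whose positive root is precisely \cref{eq:StrongBound}, with $K_J$ absorbing the product of the chain length and the Bernstein constant.

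The most delicate step is the tight constant in Part A. A naive triangle inequality over the Fourier expansion loses polynomial powers of $J$, and recovering the cubic-order suppression $[4J(2J+1)]^{-3/2}$ in $\gamma_J$ requires organizing the Werner-style decomposition so that the many elementary terms are bundled into as few products of single-argument responses as possible. In Part B, the secondary subtlety is to simultaneously choose integer $N$ and angle step $\Theta_+$ that saturate the Bernstein bound; I expect that optimizing $N$ continuously and then rounding will cost only controllable factors that can be absorbed into $K_J$.
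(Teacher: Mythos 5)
Your overall strategy is sound, but the two halves relate to the paper's proof quite differently. For Part A the paper does \emph{not} use the product decomposition you describe: it adapts Werner's construction literally, taking a hidden variable consisting of $N=4J(2J+1)$ independent uniformly distributed unit vectors, a deterministic ``angular cap'' response for Alice and an affine response for Bob, and exploiting rotation invariance of the measure; optimizing the cap width yields $\gamma_N=\sqrt{2/N}\,\max_x (x/\pi)^{N-1}\sin(x)/\pi\sim\sqrt{2}e^{-1}N^{-3/2}$. Your route --- writing each $\cos(m\alpha-n\beta)$ and $\sin(m\alpha-n\beta)$ as a sum of products of single-argument functions bounded by $1$, noting that each such product is trivially local, and taking a convex mixture padded by a zero-correlation strategy --- is a legitimate alternative and does prove the statement: the mixture is local whenever $|C_{00}|+2\sum_j(|c_j|+|s_j|)\leq 1$, and Parseval plus Cauchy--Schwarz converts the sup-norm hypothesis into this coefficient bound at a cost of only $O(\sqrt{N})$, giving a threshold of order $\tfrac14 N^{-1/2}$, which is \emph{larger} than $\gamma_J\sim N^{-3/2}$ for all $N\geq 4$. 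Your closing worry is therefore backwards: you do not need to ``recover'' the cubic suppression, because the theorem only asserts a sufficient condition and any threshold exceeding $\gamma_J$ does the job --- the ``naive triangle inequality'' already wins. (Also, no appeal to the no-signalling completion of the appendix is needed; a local model for the correlation function already furnishes a full joint distribution.)

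For Part B your plan is essentially the paper's (relational projection by a shared random offset, chained Braunstein--Caves inequality, curvature control near the maximum, optimization over the chain length), but one intermediate claim as written is geometrically impossible: you cannot make all consecutive pairs in the chain have difference exactly $\Theta_+$ while the single wrap-around pair has difference $\Theta_-$, because the chain differences telescope, $d_1-d_2+d_3-\cdots=d_{\rm wrap}$, so fixing the wrap-around at $\Theta_-$ forces the consecutive steps to be perturbed to $\Theta_+\pm\delta_N$ with $\delta_N=(\Theta_--\Theta_+)/(N-1)$. This is precisely why the quadratic curvature estimate you invoke is needed at all; with it the inequality becomes $(N-1)\left(1-\varepsilon-K\delta_N^2/2\right)-(1-\Delta)\leq N-2$, and demanding an admissible even $N$ yields the quadratic whose positive root is $\varepsilon_J$. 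The paper bounds the curvature via Parseval, $|C_{\rm rel}''|\leq\sqrt{2}J(2J+1)(4J+1)/3$, rather than Bernstein; Bernstein's $(2J)^2$ is in fact sharper, so here too you would obtain a slightly more permissive constant, which still implies the stated bound. Fix the chain geometry and both parts go through.
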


\begin{figure}[bht]
\includegraphics[width=0.42\textwidth]{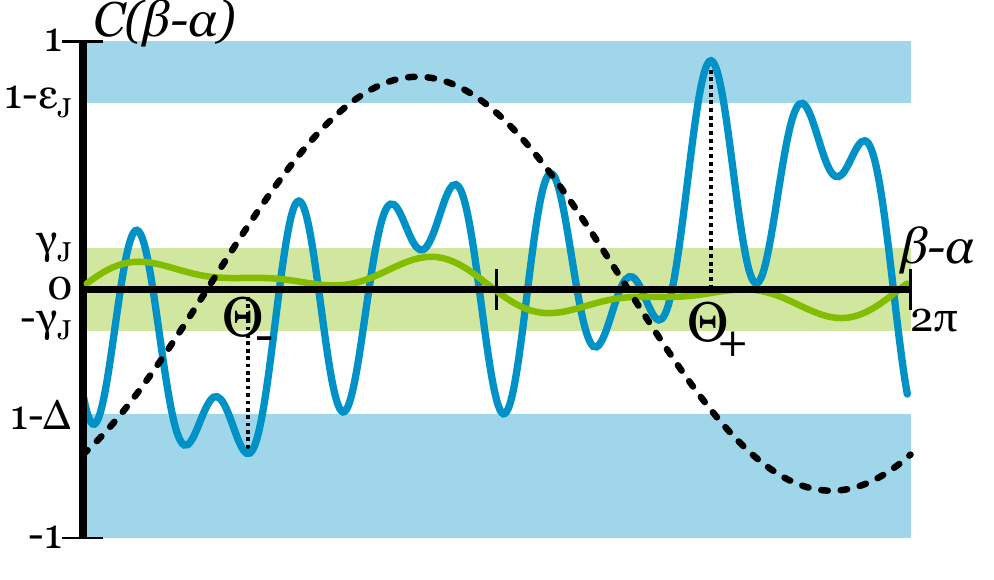}
\caption{
\label{fig:Curves}
\caphead{Two-angle relational correlation functions.} \\
A ``sufficiently noisy'' correlation function can always be reproduced exactly by a LHV model (Theorem~\ref{thm:SO2xSO2}a).
This is represented by the green curve completed contained within the central green-shaded region (drawn for $C_{00}=0$).
Conversely, if the function is ``pure enough'', then it must be nonlocal (Theorem~\ref{thm:SO2xSO2}b). 
This is represented by the blue curve with values in both extremal blue-shaded regions.
Not all curves can be realized within quantum theory, but simple sinusoidal curves certainly can (such as the dashed black curve), following from Theorem~\ref{thm:222quantum} in two dimensions.}
\end{figure}

This is a powerful result:
 with a choice between two experimental settings for Alice, and no choice made by Bob,
 we can {\em witness} nonlocality. This can be done by the following protocol:
\begin{itemize}
\setlength{\itemsep}{0em}
 \item Alice and Bob share some random angle $\lambda$, uniformly distributed in the interval $[0,2\pi)$.
 \item Alice chooses locally freely between the two possible angles $\alpha\in\{\Theta_+,\Theta_-\}$.
 \item Alice now inputs $\alpha+\lambda$ into her half of the box, while Bob inputs $\lambda$.
 \item By repeating the protocol, they determine the correlations $C_{\rm rel}(\Theta_+)$ and $C_{\rm rel}(\Theta_-)$, and verify that they violate the inequality above. 
\end{itemize}
Randomization over $\lambda$ effectively projects $C$ onto its relational part 
 $C_{\rm rel}(\alpha,\beta)=\frac 1 {2\pi} \int_0^{2\pi} \mathrm{d}\lambda \, \corr{\alpha\!+\!\lambda,\beta\!+\!\lambda}$,
 which only depends on $\alpha-\beta$. 
The protocol above fixes $\beta$ to zero, while $\alpha\in\{\Theta_+,\Theta_-\}$. This is sufficient to determine the two correlation values.

The protocol assumes that Alice and Bob have some physically motivated promise on the maximum representation label $J$ (e.g.\ by assuming an upper bound on the total energy of the system, or the number of elementary particles transmitted), and that they know the angles $\Theta_+$ and $\Theta_-$ beforehand. 
The latter assumption is analogous to standard Bell experiments, where the relevant measurement settings are assumed to be known.

Witnessing Bell nonlocality is not the same as directly demonstrating nonlocality (i.e.\ collecting all the statistics for a Bell test, which is only possible if Bob has some free choice too) but rather, subject to Assumption~\ref{ass:Finite}, implies the existence of an experiment that \emph{would} demonstrate nonlocality. 
In contrast to a full Bell experiment, a Bell witness has the advantage of being experimentally easier to implement: the protocol above allows one to witness nonlocality with only \emph{two} measurement settings instead of four. 
Note that \emph{only} making the correlation function relational (i.e.\ going from $C$ to $C_{\rm rel}$ as above) \emph{without} any additional assumption on $J$ is \emph{not} sufficient to obtain this reduction, as we show in~\cref{app:Necessity}.

Our protocol hence demonstrates that natural assumptions on the response of devices to spatiotemporal transformations can give additional constraints that allow for the construction of new Bell witnesses.
 This opens up the possibility of new methods of experimentally certifying nonlocal behaviour, similar to~\cite{TuraASVLA14,SchmiedBAFSTS16,WangSN17}, but without the need to presume the validity of quantum theory or to trust all involved measurement devices.

Theorem~\ref{thm:SO2xSO2} shows us that smaller values of $J$ (and hence ``simpler'' responses to changes in angles) result in more permissive bounds for finding LHV models, or witnessing non-locality. 
In our next example, we shall move from angles ($\SO{2}$) to directions ($\SO{d}$), but consider arguably the simplest non-trivial response.

\subsection{Example: Characterizing quantum correlations}
\label{sec:txv}
For our last example, 
 we shall apply our framework to characterize the set of correlations that can be realized by two parties sharing a quantum state, each locally choosing one of two binary-outcome measurements -- the thus called quantum ``(2,2,2)''-behaviours.
The set of quantum $(2,2,2)$-behaviours $\mathcal{Q}$ is a strict superset of the classical $(2,2,2)$-behaviours $\mathcal{C}$ (i.e.\ those admitting a LHV model). 
However, the set of all no-signalling behaviours $\mathcal{NS}$ is strictly larger: $\mathcal{C}\subsetneq \mathcal{Q}\subsetneq \mathcal{NS}$~\cite{KhalfinT85,PopescuR94}. 
This has led to the search for simple physical or information-theoretic principles that would explain ``why'' nature admits no more correlations than in $\mathcal{Q}$.
Several candidates have been suggested over the years, including {\em information causality}~\cite{PawlowskiPKSWZ09}, {\em macroscopic locality}~\cite{YangNSS11}, or {\em non-trivial communication complexity}~\cite{BrassardBLMTU06}, but none of these have been able to single out $\mathcal{Q}$ uniquely~\cite{NavascuesGHA15}.

Here, we will provide such a characterization by considering black boxes that transform in arguably the simplest manner.
Over a spherical input domain $\mathcal{X}=\sphere{d-1}$ 
 an $\SO{d}$--box $\cPr{a}{\vec{x}}$
 is said to {\em transform fundamentally} if the representation matrix $T_R$ in Theorem~\ref{thm:Rep} 
 can be chosen as the block matrix $\id_1\oplus R$, 
 where $\id_1:=\left(1\right)$ and $R$ is the fundamental representation of $\SO{d}$ (e.g.\ for $d=3$, $\{R\}$ are the familiar rotation matrices).
Consequently, a black box that transforms fundamentally has an {\em affine} representation, $\cPr{a}{\vec{x}} = c_0^a + \vec{c}^{\,a}\cdot\vec{x}$ where $\vec{x}\in\sphere{d-1}$ is the input, and $c_0^a\in\reals^+$, $\vec{c}^{\,a}\in\reals^d$  (proof in \cref{app:Vector}).
 
Motivated by symmetry, we consider a class of {\em unbiased} black boxes that do not prefer any particular output when averaged over all possible inputs. 
This implies that $c_0^a = 1/|\mathcal{A}|$ for every $a$.
For example, this symmetry holds for measurements on quantum spin-$\frac{1}{2}$ particles: spin $+\frac{1}{2}$ in one direction is the same as spin $-\frac{1}{2}$ in the opposite, and hence neither outcome is preferred on average. 

Imagine Alice and Bob residing in $d$-dimensional space ($d\geq 2$), sharing a non-signalling box $\cPr{a,b}{\vec{x},\vec{y}}$, where both inputs $\vec x,\,\vec y\in S^{d-1}$ are spatial directions, and $a,\,b$ each can take two values. 
Suppose that their {\em conditional boxes} transform fundamentally and are unbiased.
A conditional box ${\rm P}_A^{b,\vec{y}}\left(a\middle|\vec{x}\right) := \cPr{a,b}{\vec{x},\vec{y}} / \cPrs{B}{b}{\vec{y}}$ describes the local black box Alice would have if she was told Bob's measurement choice $\vec{y}$ and outcome $b$.
If all conditional boxes for Alice and Bob transform fundamentally, then the bipartite box is said to {\em transform fundamentally locally}.
Similarly, if all conditional boxes are unbiased, $\cPr{a,b}{\vec{x},\vec{y}}$ is said to be {\em locally unbiased}.

Surprisingly, these local symmetries severely constrain the \emph{global} correlations: 
 they allow for only and exactly those correlations that can be realized by two parties who share a \emph{quantum} state and choose between two possible two-outcome quantum measurements each---the \emph{quantum $(2,2,2)$-behaviours}:
\begin{theorem}
\label{thm:222quantum}
The quantum $(2,2,2)$-behaviours are \mbox{exactly} those that can be realised by binary-outcome \mbox{bipartite} {$\SO{d}\!\times\!\SO{d}$-}boxes that {\em transform fundamentally locally} and are {\em locally unbiased}, restricted to two choices of input direction per party per box, and statistically mixed via shared randomness.
\end{theorem}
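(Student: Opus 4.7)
The plan is to prove both inclusions of the ``exactly'' claim, by first reducing the structural constraints on the bipartite box to a simple bilinear form, then matching it to Tsirelson's characterisation of the quantum $(2,2,2)$ correlations.

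First I would pin down the explicit form of any locally-unbiased, locally-fundamentally-transforming bipartite $\SO{d}\!\times\!\SO{d}$-box. Alice's conditional box $\mathrm{P}_A^{b,\vec y}(a|\vec x)$ transforms fundamentally, so by the single-party affine representation derived earlier together with unbiasedness it takes the form $\tfrac{1}{2}(1+a\,\vec c^{\,a,b,\vec y}\!\cdot\vec x)$. Imposing the analogous form for Bob's conditional box, together with no-signalling and normalisation, forces $\cPr{a,b}{\vec x,\vec y}$ to be bi-affine in $(\vec x,\vec y)$. Demanding that every conditional box is unbiased then kills the purely-$\vec x$ and purely-$\vec y$ linear pieces, collapsing the distribution to
\begin{equation*}
\cPr{a,b}{\vec x,\vec y}=\tfrac{1}{4}\bigl(1+ab\,\vec x\trans T\vec y\bigr),
\end{equation*}
for some real matrix $T$, with positivity equivalent to $\|T\|_{\rm op}\le 1$.

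Next I would analyse the restricted-input behaviour. For any choice of $\vec x_0,\vec x_1$ and $\vec y_0,\vec y_1$, the joint distribution is fully determined by the $2\!\times\!2$ correlation matrix $C_{ij}:=\vec x_i\trans T\vec y_j$. Tsirelson's theorem asserts that the unbiased quantum $(2,2,2)$-behaviours are exactly those whose correlation matrix factorises as $C_{ij}=\langle u_i,v_j\rangle$ for unit vectors $u_i,v_j$ in some real Hilbert space. The forward direction is immediate: take $T=\id_2$ and set $\vec x_i=u_i$, $\vec y_j=v_j$ after embedding the (at most two-dimensional) spans into $\sphere{d-1}$. For the converse I would use a Naimark-style dilation: let $\tilde u_i:=T\trans\vec x_i$ (so $\|\tilde u_i\|\le 1$), then lift via one extra dimension to unit vectors $u_i:=(\tilde u_i,\sqrt{1-\|\tilde u_i\|^2})$ and $v_j:=(\vec y_j,0)$, matching Tsirelson's factorisation exactly.

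Finally, I would account for possibly biased marginals by observing that the classical shared randomness in the mixing step is free to encode input-dependent local output relabellings, so the convex hull of unbiased boxes of the above form reaches every biased quantum $(2,2,2)$-behaviour with the same correlations; the converse inclusion holds because each individual unbiased box sits inside the convex quantum set and the quantum set is closed under classical shared-randomness mixing. I expect the main obstacle to be the bi-affine reduction, specifically tracing how bi-affinity, no-signalling, normalisation, and the simultaneous unbiasedness of every conditional box conspire to eliminate all terms except the bilinear $ab\,\vec x\trans T\vec y$; a secondary subtlety is handling the shared-randomness step cleanly so that the statement recovers the full (possibly biased-marginal) quantum $(2,2,2)$ set without ever exiting it.
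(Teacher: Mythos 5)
Your reduction to the normal form $\cPr{a,b}{\vec x,\vec y}=\tfrac14\left(1+ab\,\vec x\trans T\vec y\right)$ contains the central error. Local unbiasedness is a statement about the \emph{Haar average} of each conditional box over all inputs, not about its value at every input. Carrying out the bi-affine reduction carefully --- write $\cPr{a,b}{\vec x,\vec y}=k^{ab}+\vec u^{ab}\cdot\vec x+\vec v^{ab}\cdot\vec y+\vec x\trans M^{ab}\vec y$ --- unbiasedness of Alice's conditional boxes forces only $k^{+b}+\vec v^{+b}\cdot\vec y=k^{-b}+\vec v^{-b}\cdot\vec y$, i.e.\ it makes the purely-$\vec y$ piece independent of $a$; combined with Bob's condition and no-signalling one lands on $\cPr{a,b}{\vec x,\vec y}=\tfrac14+\tfrac a2\vec u\cdot\vec x+\tfrac b2\vec v\cdot\vec y+ab\,\vec x\trans M\vec y$ with $\vec u,\vec v$ in general nonzero. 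These surviving local linear pieces are exactly what produces the biased marginals $\cPrs{A}{a}{\vec x}=\tfrac12+a\vec u\cdot\vec x$ of, say, a partially entangled two-qubit state. Your proposed repair --- recovering biased marginals by input-dependent output relabellings plus shared randomness applied to uniform-marginal boxes --- cannot work: any bijective relabelling of a uniformly distributed output leaves it uniform, and convex mixtures of uniform-marginal behaviours have uniform marginals, so the biased-marginal quantum $(2,2,2)$-behaviours are unreachable from your normal form. Relatedly, Tsirelson's factorisation theorem characterises only the achievable correlation matrices $\langle ab\rangle$, not the full behaviour including marginals, so even where your normal form applies it proves less than the theorem claims.

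The paper avoids both problems. For the ``boxes $\Rightarrow$ quantum'' direction it keeps the general form above, packages it as a bilinear form $\omega_{AB}$ on the cones generated by the vectors $(1,a\vec x)\trans$ and $(1,b\vec y)\trans$, verifies unitality and positivity, and invokes the Kleinmann--Osborne--Scholz--Werner extension of the Barnum et al.\ result to obtain a quantum state and POVMs reproducing the \emph{entire} behaviour, marginals included. For the converse it realises the extremal quantum $(2,2,2)$-behaviours explicitly by rank-one projective measurements on two-qubit pure states (encoding the Bloch vectors into $\sphere{d-1}$ via a projection), and then uses shared randomness precisely because the set of two-qubit POVM behaviours is non-convex. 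To salvage your route you would have to keep the $\vec u$, $\vec v$ terms throughout and replace the Tsirelson step by a criterion (or dilation argument) that controls marginals as well as correlations.
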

\noindent
The proof is given in~\cref{app:Vector}.

A few remarks are in place. 
First, the \emph{unbiasedness} refers to the total set $\sphere{d-1}$ of possible inputs per party, not to the two inputs to which the box is restricted. 
Even if the unrestricted behaviour is unbiased in the sense described above, the resulting $(2,2,2)$-behaviour can be biased.
Secondly, this unbiasedness of the underlying {$\SO{d}\!\times\!\SO{d}$-}box is necessary to recover the quantum correlations -- without it, one can realize arbitrary nonsignalling correlations, including PR--box behaviour, in a way that still transforms locally fundamentally (we give an example in \cref{app:Vector}).
Finally, shared randomness is necessary to realize explicitly {\em non-extremal} quantum correlations by such boxes,
 following on the observation that the set of $(2,2,2)$--behaviours realizable by POVMs on two qubits is not convex~\cite{PalV09,DonohueW15}. 
 Namely, if both parties share the $(2,2,2)$-behaviours $P_0$ and $P_1$ and a random bit $c\in\{0,1\}$ that equals $0$ with probability $\lambda$, they can statistically implement the mixed behaviour $\lambda P_0+(1-\lambda)P_1$ by feeding their inputs into box $P_c$.

For $n\!=\!2$ parties with $m\!=\!2$ measurements and $k\!=\!2$ outcomes each, our result provides a characterization of the quantum set.
Although Theorem~\ref{thm:222quantum} cannot be extended to general $(m,n,k)$-behaviours~\cite{AcinACHKLMP10} without modification,
 this result shows that our framework of $\mathcal{G}$-boxes offers a very natural perspective on physical correlations, and reinforces earlier observations that hint at a deep fundamental link between the structures of spacetime and quantum theory~\cite{Wootters80, MuellerM13,HoehnM16,GarnerMD17}.

\section{Discussion and outlook}
\label{sec:Outlook}
We have introduced a general framework for semi-device-independent information processing, without assuming quantum mechanics,
 for black boxes whose inputs are degrees of freedom that break spatiotemporal symmetries. 
Such black boxes have \mbox{characteristic} \mbox{probabilistic} responses to symmetry transformations, 
 and natural assumptions about this behaviour
 can certify technologically important properties like the presence or absence of Bell correlations.

Specifically, we have shown that the quantum $(2,2,2)$-behaviours can be exactly classified as those of bipartite boxes that transform \emph{locally} in the simplest possible way 
 -- by the fundamental representation of $\SO{d}$ rotations, respecting the unbiasedness of outcomes. 
For Bell experiments with $\SO{2}\times\SO{2}$-boxes, 
 we have shown that correlations that are quantifiably ``noisy enough'' always admit a local hidden variable model, 
 whereas relational correlations for which there are settings with differing ``purity'' must violate a Bell inequality. 
Since the underlying technical tools (e.g.\ Schur orthogonality~\cite{Sepanski07}) hold in greater generality, 
 many of our results could be applied to other groups.

Furthermore, these results have allowed us to construct a protocol to \emph{witness} the violation of a Bell inequality within a causal structure that is otherwise too simple to admit the direct detection of nonlocality.
We believe that our approach can be applied to experimental settings, 
 such as the recent demonstration of Bell correlations in a Bose--Einstein condensate~\cite{SchmiedBAFSTS16}, 
 and potentially eliminate the necessity to trust all detectors or to assume the exact validity of quantum mechanics. 
Many of these experiments \emph{do} work with spatiotemporal inputs like Rabi pulses, which makes our approach particularly natural for analyzing them.

We have predominantly worked under the assumption that ensembles of black boxes are characterized by a \emph{finite} number of parameters, and -- more specifically~-- that an upper bound on the representation label (say, the ``spin'' $J$) of the boxes is known. 
On one hand, this assumption can likely be weakened, by employing group-theoretic results such as the Peter--Weyl theorem~\cite{Sepanski07}. 
On the other hand, we have argued that this assumption is natural: 
 it is weaker than assuming a Hilbert space with bounded dimension 
 (standard in the semi-device-independent framework~\cite{PawlowskiB11})
 and constitutes a generalization of an ``energy bound'' beyond quantum theory (cf.~\cite{BranfordDG18}). 
Moreover, it incorporates an intuition conceptually closer to particle physics: to quantify the potential eavesdropping side channels, one might not count Hilbert space dimensions, but rather representation labels, since these are intuitively (and sometimes rigorously) related to the total number of particles.

Our framework opens up several potential avenues for future work. 
First, as the witness example demonstrates, 
 our formalism hints at novel semi-device-independent protocols based on assumptions with firmer physical motivation than the usual dimension bounds. 
In contrast to recent proposals for using energy bounds~\cite{vanHimbeeckWCGP17,VanHimbeeckP19,Rusca19}, our assumption on the devices' symmetry behaviour does not presume the validity of quantum mechanics, but rather embodies a natural upper bound to the ``fine structure'' of the devices' response. 
Meanwhile, one might apply the functional approach~\cite{Zukowski93,SenDeSZ02} to our framework by taking Haar integrals over spatiotemporal input spaces to derive a device--independent family of generalized Bell--\.{Z}ukowski inequalities for various limits of fine structure.

Secondly, our framework informs novel experimental searches for conceivable physics beyond quantum theory. 
Previous proposals (e.g.\ superstrong nonlocality~\cite{PopescuR94} or higher-order interference~\cite{Sorkin94,UdudecBE10}) have simply described the probabilistic effects without predicting how they could actually occur within spacetime as we know it. 
This has made the search for such effects seem like the search for a needle in a haystack~\cite{SinhaCJLW10}. 
Our formalism promises a more direct spatiotemporal description of such effects -- hopefully leading to predictions that are more tied to experiments and in greater compatibility with spacetime physics.

Combining the principles of quantum theory with special relativity has historically been an extremely fruitful strategy. 
Here, we propose to extend this strategy to device-independent quantum information and even beyond quantum physics. 
In principle, suitable extensions of our framework would allow us to address questions such as: \emph{which probability rules are compatible with Lorentz invariance}? 
Any progress on these kind of questions has the potential to give us fascinating insights into the logical architecture of our physical world.

\acknowledgments
We are grateful to Miguel Navascu\'es, Matt Pusey, and Valerio Scarani for discussions.
This project was made possible through the support of a grant from the John Templeton Foundation. 
The opinions expressed in this publication are those of the authors and do not necessarily reflect the views of the John Templeton Foundation.
We acknowledge the support of the Austrian Science Fund (FWF) through the Doctoral Programme CoQuS.
This research was supported in part by Perimeter Institute for Theoretical Physics. 
Research at Perimeter Institute is supported by the Government of Canada through the Department of Innovation, Science and Economic Development Canada and by the Province of Ontario through the Ministry of Research, Innovation and Science. 

%

\balancecolsandclearpage

\newpage
\clearpage
\appendix

\section{The representation of spatiotemporal degrees of freedom in black box statistics}
\label{app:GroupRep}
Let us first furnish a mathematical description of a black box as an input--output process. We begin with the single party case (say, Alice).
Suppose the domain of Alice's inputs is the set $\mathcal{X}$, and of her outputs is the finite set $\mathcal{A}$. 
As motivated in the main text, we are interested in the case where $\mathcal{X}$ is a homogeneous space. 
That is, we have a group $\mathcal{G}$ that acts transitively on the set of inputs $\mathcal{X}$, such that $\mathcal{X}=\mathcal{G}/\mathcal{H}$, and $\mathcal{H}\subseteq\mathcal{G}$ is the corresponding stabilizer subgroup. 
The paradigmatic example is given by $\mathcal{X}=S^{d-1}$, $\mathcal{G}=\SO{d}$ and $\mathcal{H}=\SO{d\!-\!1}\subset \mathcal{G}$, such that the inputs $\vec x \in\mathcal{X}$ are unit vectors. 
Even though the inputs need not be vectors in general, we will use the vector notation in the following for convenience. 
We will assume that $\mathcal{G}$ is a locally compact group, such that all bounded finite-dimensional representations are unitary~\cite{FiniteDim}.

For such an input domain, we can assign an arbitrary ``default input'' $\vec x_0\in\mathcal{X}$, such that every other input $\vec x\in\mathcal{X}$ can be written as $\vec x=R_{\vec x}\vec x_0$ for some suitable transformation $R_{\vec x}\in\mathcal{G}$. 
Physically, we can imagine that Alice chooses her input by ``rotating'' the default input $\vec x_0$ into her desired direction $\vec x$, and she can do so by applying a suitable rotation $R_{\vec x}$. 
In general, $R_{\vec x}$ is not unique, and Alice's freedom of choice of transformation is given by $\mathcal{H}$.

A {\bf black box} $\mathrm{P}$ is then a map $\mathrm{P}: \mathcal{X} \to \reals^{|\mathcal{A}|}$ such that for $\vec{x}\in\mathcal{X}$,
$\mathrm{P}^a: \vec{x} \mapsto \cPr{A=a}{X=\vec{x}}$, where $\mathrm{P}^a$ is the $a^{\rm th}$ element of the vector map.
Since for probabilities $0 \leq\cPr{A=a}{X=\vec{x}}\leq 1$,
 each $\mathrm{P}^a$ is a non-negative real bounded function on $\mathcal{X}$.
For probabilities, we also have the constraint that for all $\vec{x}$, $\sum_a \mathrm{P}^a \vec{x} = 1$; so the range of the vector function $\mathrm{P}$ is actually that of $\left(|\mathcal{A}|-1\right)$--dimensional simplices (a compact convex subspace of $\reals^{|\mathcal{A}|}$).
As such, $\mathrm{P}\in\mathcal{B}(\mathcal{X})^{|\mathcal{A}|}$ where $\mathcal{B}(\mathcal{X})$ is the set of bounded functions on $\mathcal{X}$.

\begin{definition}[$\mathcal{G}$-box]
A black box (formalized above) whose input domain $\mathcal{X}$ is a homogeneous space acted transitively upon by the group $\mathcal{G}$ is known as a $\mathcal{G}$-box.
\end{definition}

\vspace{0.5em}
\noindent {\bf Proof of Theorem 1.}
{\em
Consider a $\mathcal{G}$-box whose ensemble behaviour can be characterized by a finite number of parameters (Assumption~\ref{ass:Finite}).
There is a representation of the symmetry group $\mathcal{G}$ in terms of real orthogonal matrices $R\mapsto T_R$, such that for each outcome $a$, the outcome probabilities $\cPr{a}{x_R}$ are a fixed (over $R$) linear combination of matrix entries of $T_R$.
}
\begin{proof} 

Suppose Alice has a black box $\mathrm{P}$,
 and access to a geometric freedom $\mathcal{G}$ acting on $\mathcal{X}$.
For each $R\in\mathcal{G}$, Alice can induce a new black box $\mathrm{P}'$ by first applying $R$ to her input $\vec{x}$ and then supplying the input $R\vec{x}$ to $\mathrm{P}$,
 which acts as $\mathrm{P}'^a: \vec{x} \mapsto \cPr{a}{R\vec{x}}$, i.e.\ $\mathrm{P}'\!\left(a\middle|\vec{x}\right)=\cPr{a}{\!R\vec{x}}$.
 
For each $R$, we can define a map $T_R: \mathrm{P} \mapsto \mathrm{P}'$, acting on each component of $\mathrm{P}$ via $T_R \mathrm{P}^a=\mathrm{P}'^a$. 
Obviously, $T_R\circ T_S=T_{RS}$, so if we denote the ``space of black boxes'' accessible to Alice by $\Omega_{\mathcal{G}} := \{T_R \mathrm{P} \,|\, R\in\mathcal{G}\} \subseteq \mathcal{B}(\mathcal{X})^{|A|}$, then $T_R$ defines a group action on $\Omega_{\mathcal{G}}$.

Consider the linear extension $\Omega_{\mathcal{G}}^\reals := \mathrm{span}\left(\Omega_\mathcal{G}\right)$, a linear subspace of $\mathcal{B}(\mathcal{X})^{|A|}$, with elements $Q = \sum_{i=1}^n \lambda_i \mathrm{P}_i$, where $n\in\nats$ is arbitrary but finite, all $\lambda_i \in \reals$, and $\mathrm{P}_i\in \Omega_\mathcal{G}$.
Note $Q: \mathcal{X} \to \reals^{|\mathcal{A}|}$, but without further restriction on $\{\lambda_i\}$ this may map to outside of the simplex of normalized probabilities.

Now, consider the effect of $R\in\mathcal{G}$ on some object $Q$. 
Since $Q: \vec{x}\mapsto \sum_i \lambda_i \cPr{a}{\vec{x}}$,
 applying $R$ first to take $R:\vec{x}\mapsto\vec{x}'$ gives us $Q\circ R: \vec{x} \mapsto \sum_i \lambda_i \cPr{a}{R\vec{x}}$,
  and hence $Q\circ R = \sum_i \lambda_i T_R P_i$. 
Since $T_R P=P\circ R$ for $P\in\Omega_{\mathcal{G}}$, we can define the map $\tilde T_R:\Omega_{\mathcal{G}}^{\reals}\to \Omega_{\mathcal{G}}^{\reals}$ via $\tilde T_R Q:=Q\circ R$ as an extension of the map $T_R$. By construction, every $\tilde T_R$ is a linear map, and
\begin{equation}
     \tilde T_R \tilde T_S(Q)=Q\circ R\circ S=Q\circ(R\circ S)=\tilde T_{RS}(Q),
\end{equation}
 hence $R\mapsto \tilde T_R$ is a real linear representation of $\mathcal{G}$. Since $\tilde T_R$ is an extension of $T_R$, we drop the tilde from our notation. 
As we have assumed that ensembles of black boxes can be characterized by a finite number of parameters, 
 the linear space $\Omega^\reals_\mathcal{G}$ is finite-dimensional.
Then $T_R$, as linear maps acting on a finite-dimensional real vector space, may be expressed as real matrices.

Next, we need to show that the representation $R\mapsto T_R$ is \emph{bounded}, i.e.\ that $\sup_{R\in\mathcal{G}} \|T_R\|<\infty$.
This will exclude, for example, cases like $\mathcal{G}=\left(\reals, +\right)$ and $T_t:=\left(\begin{smallmatrix} 1 & t \\ 0 & 1  \end{smallmatrix}\right)$. 
To this end, let $P_1,\ldots,P_D\in \Omega_{\mathcal{G}}$ be a linearly independent set of boxes that spans $\Omega_{\mathcal{G}}^{\reals}$ (that is, a basis of boxes, hence $D=\dim \Omega_{\mathcal{G}}^{\reals}$).
Then, every ${\rm P}\in \Omega_{\mathcal{G}}^{\reals}$ has a unique representation ${\rm P}=\sum_{i=1}^D \alpha_i P_i$, and $\|{\rm P}\|_1:=\sum_{i=1}^D |\alpha_i|$ defines a norm on $\Omega_{\mathcal{G}}^{\reals}$. 
We can define another norm on this space via
\begin{equation}
   \left\|P\right\|:=\sup_{\vec x\in\mathcal{X}} \sum_{a\in\mathcal{A}}  \left|\cPr{a}{\vec{x}}\right|.
\end{equation}
This is finite since ${\rm P}\in\mathcal{B}(\mathcal{X})^{|\mathcal{A}|}$, and it is easy to check that it satisfies the properties of a norm.
Since all norms on a finite-dimensional vector space are equivalent, there is some $c>0$ such that $\|\bullet\|_1\leq c\|\bullet\|$. Furthermore, all ${\rm P}'\in\Omega_{\mathcal{G}}$ satisfy $\|{\rm P}'\|=1$. Thus, noting that $T_R P_i\in\Omega_{\mathcal{G}}$ for all $i=1,\ldots,D$, we get
\begin{align}
\|T_R {\rm P}\| &= \left\| \sum_{i=1}^D \alpha_i T_R P_i\right\|\leq \sum_{i=1}^D |\alpha_i|\cdot \|T_R P_i\| \nonumber \\
& = \|{\rm P}\|_1 \leq c\cdot \|{\rm P}\|.
\end{align}
This establishes that the operator norm of all $T_R$ with respect to $\|\bullet\|$ (and hence with respect to all other norms) is uniformly bounded. 
Since we have assumed that $\mathcal{G}$ is locally compact, this implies that there is a basis of $\Omega_{\mathcal{G}}^{\reals}$ in which the $T_R$ are orthogonal matrices.

Consider now the {\em evaluation functional} $\delta^{a}_{\vec{x}}:  \Omega^\reals_{\mathcal{G}} \to \reals$;
  namely, the map from the space of black boxes to the particular probability of outcome $a$ given input $\vec{x}$.
It follows that the statistics $\cPr{a}{\vec{x}} = \cPr{a}{R\vec{x}_0} = T_R P^a\!\left(\vec{x}_0\right) = \delta^{a}_{\vec{x}_0} \left(T_R P \right)$.
Since the evaluation functional is a linear map, we then find that the probabilities are given by a linear combination of elements from $T_R$.
For all $\vec{x}\in\mathcal{X}$, we use the same $P$ and the same $\delta^{a}_{\vec{x}_0}$ such that the only element that changes is the representation matrix $T_R$.
\end{proof}

Arguing via harmonic analysis on homogeneous spaces~\cite{Kirillov95}, we expect that Theorem~\ref{thm:Rep} can be extended: it is not only entries of $T_R$ that appear in the probability table $P(a|x_R)$, but, more specifically, generalized spherical harmonics. 
A taste of this appears in Lemma~\ref{lem:Bloch}, but since the formulation of Theorem~\ref{thm:Rep} is sufficient for the purpose of this article, we defer this extension to future work.

\section{$\SO{2}\times\SO{2}$ Bell experiment setting}
\subsection{General form of correlations}
\label{app:SO2xSO2}
\begin{lemma}
\label{thm:SO2Prob}
Consider a bipartite $\SO{2}\!\times\!\SO{2}$-box 
-- i.e.\ Alice and Bob can each choose their inputs as angles $\alpha, \beta \in [0,2\pi)$ -- 
 with local binary outcomes $a\in\{+1,-1\}$ and $b\in\{+1,-1\}$.
Then, the most general joint probability distribution consistent with Theorem~\ref{thm:Rep} is
\begin{align}
\label{eq:GenericProbApp}
\cPr{a,b}{\alpha,\beta} := &\sum_{m=0}^{2J} \sum_{n=-2J}^{2J} c^{ab}_{mn} \cos\left( m \alpha - n \beta\right) \nonumber \\
&\hspace{6em} + s^{ab}_{mn} \sin\left( m \alpha - n \beta\right),
\end{align}
where $J$ is some non-negative integer or half-integer. (Note that this does not yet assume the no-signalling principle.)
\begin{proof}
While the representation $T_R=T_{\alpha,\beta}$ from Theorem~\ref{thm:Rep} acts on a real vector space $V$ of finite dimension $D$, we can also regard it as a representation on the complexification $W=V\oplus iV$. 
Since ${\rm SO}(2)\times {\rm SO}(2)$ is an Abelian group, all its irreducible representations are one-dimensional~\cite{Sepanski07}. 
Thus, we can decompose $W$ as $W=\bigoplus_{j=1}^D W_j$, where each $W_j$ is a one-dimensional invariant subspace on which $T_{\alpha,\beta}$ acts as a complex phase. 
It follows that $T_{\alpha,\beta}=\bigoplus_{j=1}^D \exp(i(m_j\alpha-n_j\beta))$ with suitable integers $m_j,n_j\in\ints$ (to see this, write $T_{\alpha,\beta}$ as a composition of the ${\rm SO}(2)$-representations $T_{\alpha,0}$ and $T_{0,\beta}$). 
Then, due to Theorem~\ref{thm:Rep}, $\cPr{a,b}{\alpha,\beta}$ must be a linear combination of real and imaginary parts of $T_{\alpha,\beta}$, which proves that it is of the form~(\ref{eq:GenericProbApp}).
\end{proof}
\end{lemma}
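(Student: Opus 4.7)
The plan is to invoke Theorem~\ref{thm:Rep} and then explicitly classify the finite-dimensional real orthogonal representations of $\mathcal{G}=\SO{2}\times\SO{2}$. By Theorem~\ref{thm:Rep}, each joint probability $\cPr{a,b}{\alpha,\beta}$ is a fixed real linear combination (in $(a,b)$) of matrix entries of a real orthogonal representation $T_{\alpha,\beta}$ of $\mathcal{G}$ on some finite-dimensional real vector space $V$; the space is finite-dimensional by Assumption~\ref{ass:Finite}.

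The next step is a character calculation. Complexifying $V$ to $W = V \oplus iV$, I use the standard fact that every irreducible complex representation of a compact abelian group is one-dimensional; hence $T_{\alpha,\beta}$ decomposes on $W$ as a direct sum of characters of the form $\chi_{m,n'}(\alpha,\beta) = \exp(i(m\alpha + n'\beta))$ with $m, n' \in \ints$. Passing back to the real form $V$, each non-trivial pair of complex-conjugate characters $\chi_{m,n'}$ and $\chi_{-m,-n'}$ combines into a single real $2\times 2$ block that is, in a suitable orthonormal basis, the rotation matrix with angle $m\alpha + n'\beta$; trivial summands $(m,n')=(0,0)$ contribute $1\times 1$ identity blocks. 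Every entry of $T_{\alpha,\beta}$ is therefore a fixed real linear combination of terms $\cos(m\alpha + n'\beta)$ and $\sin(m\alpha + n'\beta)$, and so is each probability $\cPr{a,b}{\alpha,\beta}$.

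The remaining work is cosmetic bookkeeping. Setting $n=-n'$ introduces the sign convention of equation~(\ref{eq:GenericProbApp}); the parities $\cos(-x)=\cos(x)$ and $\sin(-x)=-\sin(x)$ allow me to restrict to $m\geq 0$ by absorbing sign flips into redefined coefficients $c^{ab}_{mn}, s^{ab}_{mn}$; finally, because $\dim V < \infty$, only finitely many pairs $(m,n)$ occur, so there is a non-negative integer or half-integer $J$ with $m \leq 2J$ and $|n|\leq 2J$. I do not anticipate any substantial obstacle here: the argument is entirely standard character theory for compact abelian Lie groups. The only step requiring minor care is to verify that, after passing to the real form, no structure beyond the $2\times 2$ rotation blocks and trivial $1\times 1$ summands appears; this follows because $\chi_{m,n'}$ and $\chi_{-m,-n'}$ are complex conjugates and thus form a single real irreducible when $(m,n')\neq(0,0)$.
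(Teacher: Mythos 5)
Your proposal is correct and follows essentially the same route as the paper: invoke Theorem~\ref{thm:Rep}, complexify the representation space, decompose into one-dimensional characters $\exp(i(m\alpha - n\beta))$ of the abelian group $\SO{2}\times\SO{2}$, and read off the probabilities as linear combinations of the resulting cosines and sines. Your extra bookkeeping on the real $2\times 2$ rotation blocks and the sign conventions for restricting to $m\geq 0$ is a slightly more explicit rendering of what the paper leaves implicit, but it is the same argument.
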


\subsection{Generic no-signalling correlations}
\label{app:AnyCIsNoSig}
It is well-known {(e.g.\ \cite{PopescuR94})} that the no-signalling principle does not impose any constraints on the form of the correlation function if we have a bipartite box with two outcomes $a,b\in\{+1,-1\}$ each. 
Namely, if $\mathcal{X}, \mathcal{Y}$ denote two arbitrary sets of inputs, 
 given an arbitrary function $C:\mathcal{X}\times\mathcal{Y}\to\reals$ with $-1\leq \corr{x,y} \leq 1$ for all $x,y \in \mathcal{X},\mathcal{Y}$, the simple prescription
\begin{equation}
   \cPr{a,b}{x,y} := \frac{1}{4} + \frac{1}{4} a b \,\corr{x,y}
\end{equation}
generates a valid no-signalling distribution that has $C(x,y)$ as its correlation function. 
It is a simple exercise to check that $C$ is non-negative, normalized and no-signalling, and that $C$ is the correlation function for $\mathrm{P}$.

\subsection{Local hidden variable models for\\ $\SO{2}\times\SO{2}$ settings}
\label{app:LHV}
Generalizing ideas of~\citet{Werner89}, we can show that for noisy enough correlation functions of $\SO{2}\times\SO{2}$ settings,
 we can always construct a LHV model that achieves these correlations.

\begin{lemma}
\label{LemLHV1}
Consider any two-angle function
\small
\begin{equation}
\label{eq:IndexCorrelations}
f(\alpha,\beta)= \sum_{j=1}^N\left[ c_j \cos(m_j\alpha-n_j\beta)+s_j\sin(m_j\alpha-n_j\beta)\right],
\end{equation}
\normalsize
for $(m_j,n_j)\in\ints\times\ints\setminus(0,0)$ (i.e.\ disallowing constant terms).
Without loss of generality\footnote{
These restrictions ensure that the coefficients $c_j$ and $s_j$ are associated with unique trigonometric functions.
} in $f(\alpha,\beta)$,
 we disallow $(m_i,n_i) = (m_j,n_j)$ when $i\neq j$,
 choose $m_j \geq 0$, 
 and if $m_j=0$ then we choose $n_j>0$.
Suppose $-1\leq f(\alpha,\beta)\leq 1$ for all $\alpha,\beta$. 
Then $C(\alpha,\beta):=\gamma\, f(\alpha,\beta)$ is a correlation function that has a LHV model whenever $0\leq\gamma\leq \gamma_N$, where
\begin{equation}
   \gamma_N := \sqrt{\frac 2 N}\max_{0\leq x \leq\pi}\left(\frac x \pi\right)^{N-1}\frac{\sin x}\pi.
\end{equation}
\end{lemma}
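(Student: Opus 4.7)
The plan is to mimic Werner's~\cite{Werner89} construction of local hidden variable models for noisy correlations, building an explicit LHV that achieves $C=\gamma f$ and then bounding the largest $\gamma$ for which the construction remains a valid probability distribution. A natural first attempt takes $\vec\lambda=(\lambda_1,\ldots,\lambda_N)\in[0,2\pi)^N$ as uniformly-distributed shared randomness (one independent phase per Fourier mode), and has Alice output $+1$ with probability $\tfrac12(1+u(\alpha,\vec\lambda))$, where $u(\alpha,\vec\lambda):=\sum_j a_j\cos(m_j\alpha-\lambda_j-\phi_j^A)$ is a trigonometric polynomial with free amplitudes $a_j$ and phases $\phi_j^A$; Bob's response $v(\beta,\vec\lambda)$ is defined analogously with amplitudes $b_j$ and phases $\phi_j^B$. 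Independence of the $\lambda_j$'s and orthogonality of cosines under $\lambda_j$-integration annihilate all $j\neq k$ cross-terms, leaving
\begin{equation*}
\int u\,v\,\frac{d^N\vec\lambda}{(2\pi)^N}=\tfrac12\sum_{j=1}^N a_jb_j\cos\bigl(m_j\alpha-n_j\beta-(\phi_j^A-\phi_j^B)\bigr),
\end{equation*}
which I would then match to $\gamma f=\gamma\sum_jA_j\cos(m_j\alpha-n_j\beta-\phi_j)$ (with $A_j=\sqrt{c_j^2+s_j^2}$) by choosing phases and solving $a_jb_j=2\gamma A_j$.

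Validity of this LHV requires $|u|,|v|\leq1$, and because the $\lambda_j$'s are independent their values can be tuned to align every cosine simultaneously, so this constraint reduces to $\sum_j|a_j|\leq1$ (Bob analogously). Symmetrising $a_j=b_j=\sqrt{2\gamma A_j}$ then gives $\sqrt{2\gamma}\sum_j\sqrt{A_j}\leq1$. Combining this with the Parseval-type bound $\sum_jA_j^2\leq2$, which follows directly from $\|f\|_\infty\leq1$, and applying Cauchy--Schwarz recovers a bound of the form $\gamma\leq c\,N^{-3/2}$; the $\sqrt{2/N}$ prefactor of $\gamma_N$ is precisely this $L^2$-to-$L^\infty$ conversion.

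The precise form $\gamma_N=\sqrt{2/N}\max_{0\leq x\leq\pi}(x/\pi)^{N-1}\sin(x)/\pi$ is, however, unlikely to come out of the naive product-of-cosines construction above; the factor $\max_x(x/\pi)^{N-1}\sin(x)/\pi$ strongly suggests a windowed response kernel whose support width $x$ is an optimisation parameter (plausibly a Fej\'er- or Dirichlet-type convolution kernel built from $N-1$ ``ramps'' combined with one sinc-like factor). My plan for the final step would therefore be to replace $u$ and $v$ by trigonometric polynomials derived from such a kernel, re-run the matching and boundedness analysis, and optimise over the width $x$ at the end. The main obstacle is identifying this specific kernel so that its boundedness threshold reproduces precisely the stated $(x/\pi)^{N-1}\sin(x)/\pi$ expression; once the kernel is pinned down, matching the Fourier coefficients of the product kernel against those of $f$, and verifying $|u|,|v|\leq1$, should reduce to a direct computation in which the maximisation over $x$ emerges naturally from optimising over the kernel's free scale parameter.
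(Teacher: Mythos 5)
Your overall strategy matches the paper's: a Werner-style LHV with $N$ independent, uniformly distributed circular hidden variables (one per Fourier mode), locally rotated by $m_j\alpha$ on Alice's side and $n_j\beta$ on Bob's, so that orthogonality kills all cross-terms and only the diagonal $\cos(m_j\alpha-n_j\beta)$ terms survive. Your first-attempt construction is in fact valid, but it only proves a weaker lemma: the constraint $\sqrt{2\gamma}\sum_j\sqrt{A_j}\le 1$ is $f$-dependent, and in the worst case over $f$ with $\sum_j A_j^2\le 2$ (all $A_j=\sqrt{2/N}$) it certifies only $\gamma\le \tfrac{1}{2\sqrt 2}N^{-3/2}\approx 0.354\,N^{-3/2}$, strictly below the stated $\gamma_N\ge \sqrt2 e^{-1}N^{-3/2}\approx 0.520\,N^{-3/2}$. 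So, as you correctly suspect, the symmetric product-of-linear-responses construction cannot reach the claimed constant, and the proof as submitted is incomplete.

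The missing idea is that the paper's construction is \emph{asymmetric}, not a single convolution kernel. Alice's response is deterministic: she outputs $+1$ iff \emph{every} rotated phase $\phi_j+m_j\alpha$ lies in a common arc $[-\xi,\xi]$ (a product of $N$ hard windows). Bob's response is the linear form $q_B(\pm)=\tfrac12\bigl(1\pm\tfrac{1}{\sqrt{2N}}\sum_j \vec b_j\cdot R_{n_j\beta}\vec\lambda_j\bigr)$ with $\vec b_j=(c_j,-s_j)^{\rm T}$, whose validity follows from a \emph{universal} Cauchy--Schwarz bound $|\sum_j\vec b_j\cdot\vec\lambda_j|\le|\vec b|\,|\vec\lambda|\le\sqrt{2}\cdot\sqrt{N}$ (using Parseval, $\sum_j(c_j^2+s_j^2)\le 2$), independent of how the weight is distributed among modes --- this is what removes the $f$-dependence that hampers your bound. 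Integrating Bob's linear form against Alice's window gives $N-1$ trivial factors of $\xi/\pi$ and one factor $\sin\xi/\pi$ from the mode carrying the cosine, yielding $C=\sqrt{2/N}\,(\xi/\pi)^{N-1}\tfrac{\sin\xi}{\pi}f(\alpha,\beta)$; maximizing over $\xi$ is precisely the stated $\gamma_N$. Your intuition that a windowed response with an optimizable width is involved is right, but identifying the correct (deterministic-window times linear) pair is the substantive step still missing from your argument.
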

\begin{proof}
In this proof, we will express all angles as numbers in the interval $[-\pi,\pi)$. 
Under the inner product $\langle f,g\rangle:=\frac 1 {2\pi^2}\int_{-\pi}^{\pi}\mathrm{d}\alpha\int_{-\pi}^{\pi}\mathrm{d}\beta f(\alpha,\beta)g(\alpha,\beta)$, the set of functions
\begin{equation}
   \cos(m_j\alpha-n_j\beta), \quad\sin(m_j\alpha-n_j\beta)
   \label{eqONS}
\end{equation}
(with $m_j$ and $n_j$ defined as above) is an orthonormal system
(this follows from Schur orthonormality for $\SO{2}\times\SO{2}$, and can be verified by direct integration).
Hence the $L^2$-norm $\|f\|^2 := \langle f, f\rangle$ satisfies
\begin{equation}
\|f\|^2 = \int_{-\pi}^{\pi}\mathrm{d}\alpha\int_{-\pi}^{\pi}\mathrm{d}\beta \frac{f\left(\alpha,\beta\right)^2}{2\pi^2}
=  \sum_{j=1}^N (c_j^2+s_j^2)  
\leq 2,
\end{equation}
since $|f(\alpha,\beta)|\leq 1$ everywhere.

Our goal is to construct a LHV model of the form
\begin{equation}
\cPr{a,b}{\alpha,\beta} = \int_{\Lambda} \mathrm d \mu(\lambda) \; \cPrs{A}{a}{\alpha,\lambda} \cPrs{B}{b}{\beta,\lambda}.
\end{equation}
We will have a hidden variable $\lambda=(\vec\lambda_1,\vec\lambda_2,\ldots,\vec\lambda_N)$, where each $\vec\lambda_j=(\cos\phi_j,\sin\phi_j)\trans$ is independently and uniformly distributed on the unit circle, hence $\mathrm{d}\mu(\lambda)=(2\pi)^{-N} \mathrm{d}\phi_1\ldots\mathrm{d}\phi_N$. 
This measure is invariant under $\SO{2}$ rotations of the individual $\vec\lambda_j$.

We will construct local probabilities that implement the dependence on $\alpha$, $\beta$, $\lambda$ in the following form:
\begin{align}
\cPrs{A}{\pm}{\alpha,\lambda} & 
 = q_A\left(\pm\left|R_{m_1\alpha}\vec\lambda_1,  \ldots, R_{m_N\alpha}\vec\lambda_N \right.\right), \\
\cPrs{B}{\pm}{\beta,\lambda} & 
 = q_B\left(\pm\left|R_{n_1\beta}\vec\lambda_1,\ldots, R_{n_N\beta}\vec\lambda_N \right.\right),
\end{align}
where $q_A$ and $q_B$ are response functions defined in the following way: $q_A(-|\vec\lambda'_1,\ldots,\vec\lambda'_N)=1-q_A(+|\vec\lambda'_1,\ldots,\vec\lambda'_N)$,
\begin{equation}
   q_A(+|\vec\lambda'_1,\ldots,\vec\lambda'_N):=\left\{
      \begin{array}{cl}
      	  1 & \mbox{if }\phi'_j\in[-\xi,\xi]\mbox{ for all }j\\
      	  0 & \mbox{otherwise}
      \end{array}
   \right.
\end{equation}
where $\xi\in (0,\pi)$ is some (small) constant and $\phi'_j$ is the angle such that $\vec\lambda'_j=(\cos\phi'_j,\sin\phi'_j)\trans$. Furthermore,
\begin{equation}
   q_B\left(\pm | \vec\lambda'_1,\ldots,\vec\lambda'_N\right):=\frac 1 2\left(1\pm \frac 1 {\sqrt{2N}} \sum_{j=1}^N \vec b_j\cdot\vec\lambda'_j\right),
\end{equation}
where $\vec b_j:=(c_j,-s_j)\trans$. 
Note that $\sum_j\vec b_j\cdot\vec\lambda'_j=\vec b\cdot\vec\lambda'$, where $\vec b:=\oplus_j \vec b_j$ and $\vec\lambda':=\oplus_j\vec\lambda'_j$. 
But since $|\vec b|^2=\sum_j |\vec b_j|^2=\sum_j(c_j^2+s_j^2)\leq 2$ and $|\lambda'|^2=\sum_j |\vec\lambda'_j|^2=N$, the sum hence is upper-bounded by $\sqrt{2N}$ due to the Cauchy-Schwarz inequality. 
This shows that $q_B$ yields valid probabilities.

We calculate the joint probability distribution obtained in the Bell test scenario:
\begin{align}
\cPr{+,\pm}{\alpha, \beta} &= \int_{\Lambda}\mathrm d \mu(\lambda) \; \cPrs{A}{+}{\alpha,\lambda} \cPrs{B}{\pm}{\beta, \lambda}  \nonumber  \\
	&\hspace{-3em}= \int_{\Lambda}\mathrm d \mu( \lambda) \; q_A(+|R_{m_1\alpha}\vec\lambda_1,\ldots,R_{m_N\alpha}\vec\lambda_N) \nonumber \\
	& \cdot q_B(\pm|R_{n_1\beta}\vec\lambda_1,\ldots,R_{n_N\beta}\vec\lambda_N).
\end{align}
We apply the substitution $\vec\lambda'_j:=R_{m_j\alpha}\vec\lambda_j$ and $\lambda':=(\vec\lambda'_1,\ldots,\vec\lambda'_N)$, noting that this does not change the integral due to our choice of measure:
\begin{align}
\cPr{+,\pm}{\alpha, \beta} & = \int_{\Lambda}\mathrm d \mu( \lambda')\ q_A(+|\vec\lambda'_1,\ldots,\vec\lambda'_N) \nonumber \\
&  \cdot	q_B(\pm|R_{n_1\beta-m_1\alpha}\vec\lambda'_1,\ldots,R_{n_N\beta-m_N\alpha}\vec\lambda'_N).
\end{align}
Due to the definition of $q_A$ and $q_B$, this equals
\begin{equation}
   \int_{-\xi}^{\xi}\frac{\mathrm{d}\phi'_1}{2\pi}\ldots \int_{-\xi}^{\xi}\frac{\mathrm{d}\phi'_N}{2\pi}\frac 1 2 \left(1\pm \frac 1 {\sqrt{2N}}\sum_j f_j(\phi'_j)\right),
\end{equation}
where 
\begin{equation}
f_j(\phi'_j)= c_j \cos(\phi'_j+n_j\beta-m_j\alpha)
-s_j\sin(\phi'_j+n_j\beta-m_j\alpha).
\end{equation}
Noting that
\begin{align}
   \int_{-\xi}^{\xi} f_j(\phi'_j)\mathrm{d}\phi'_j&=2\sin\xi\left[\strut c_j\cos(m_j\alpha-n_j\beta)\right. \nonumber\\
   & \qquad\qquad \left. + s_j \sin(m_j\alpha-n_j\beta)\strut\right],
\end{align}
we can evaluate the integral explicitly, obtaining
\begin{align}
\cPr{+\pm}{\alpha,\beta} = \hspace{-5em} & \nonumber \\
& \frac{1}{2} \left(\frac{\xi}{\pi}\right)^N
\pm \;
\frac{1}{2\sqrt{2N}}\left(\frac{\xi}{\pi}\right)^{N-1} 
\frac{\sin\xi}\pi \, f(\alpha,\beta).
\end{align}
Next, let us look at the other probabilities:
\small
\begin{align}
\cPr{-,\pm}{\alpha,\beta} \hspace{-4.5em} & \hspace{4.5em} = \int_{\Lambda}\mathrm d \mu(\lambda) \cPrs{A}{-}{\alpha,\lambda} \cPrs{B}{\pm}{\beta, \lambda} \nonumber \\
&= \int_{\Lambda}\mathrm d \mu(\lambda) (1-\cPrs{A}{+}{\alpha,\lambda}) \cPrs{B}{\pm}{\beta, \lambda} \nonumber \\
& = -\mathrm{P}(+,\pm|\alpha,\beta)+\int_\Lambda \mathrm{d}\mu(\lambda) q_B(\pm|R_{n_1\beta}\vec\lambda_1,\ldots,R_{n_N\beta}\vec\lambda_N),
\end{align}
\normalsize
and the final integral vanishes on all $\vec b_j\cdot\vec\lambda'_j$-terms of $q_B$, leaving only the constant term $1/2$. 
That is,
\begin{equation}
\cPr{-,\pm}{\alpha,\beta} = \frac{1}{2} -\mathrm{P}(+,\pm|\alpha,\beta).
\end{equation}
These give the correlation function
\begin{equation}
\corr{\alpha,\beta}=\sqrt{\frac{2}{N}} \left(\frac{\xi}{\pi}\right)^{N-1}\frac{\sin\xi}\pi f(\alpha,\beta).
\end{equation}
Finally, we define $\gamma_N$ as the largest admissible prefactor among all possible choices of $\xi$.
\end{proof}

Let us now introduce a constant term:
\begin{lemma}
\label{lem:LHVconstant}
Consider any two-angle correlation function
\small
\begin{equation}
\label{eq:IndexCorrConstant}
C(\alpha,\beta)= c_0+\sum_{j=1}^N\left[ c_j \cos(m_j\alpha-n_j\beta)+s_j\sin(m_j\alpha-n_j\beta)\right],
\end{equation}
\normalsize
where $(m_j,n_j)\in \ints\times\ints\setminus(0,0)$,
 and (as above) without loss of generality we choose $m_j\geq0$ and $n_j>0$ if $m_j=0$,
  and disallow $(m_i,n_i) = (m_j,n_j)$ if $i\neq j$. If
\begin{equation}
   \max_{\alpha,\beta}|C(\alpha,\beta)-c_0|\leq\gamma_N(1-|c_0|)
   \label{eqIneq}
\end{equation}
with constant $\gamma_N$ given by
\begin{equation}
\label{eq:AppGamma}
   \gamma_N=\left\{
      \begin{array}{cl}
      	   \sqrt{2}/\pi & \mbox{if }N=1,\\
      	   0.184375\ldots & \mbox{if }N=2,\\
      	   0.103893\ldots & \mbox{if }N=3,\\
\sqrt{2}e^{-1} \, N^{-3/2} & \mbox{if }N\geq 4,
      \end{array}
   \right.
\end{equation} 
then this correlation function has a local hidden variable model. 
\begin{proof}
First, we obtain the form of $\gamma_N$ by solving the optimization problem of Lemma~\ref{LemLHV1} exactly for $N=1,2,3$, and by substituting $x=\pi\left(1-\frac 1 N\right)$ and using
\begin{equation}
   \left(1-\frac 1 N\right)^{N-1}\sin\left[\pi\left(1-\frac 1 N\right)\right]\geq \frac \pi {Ne} \mbox{ for }N\geq 4.
\end{equation}
We can add the constant function $1/\sqrt{2}$ to the orthonormal system in~(\cref{eqONS}); similar reasoning as in the proof of Lemma~\ref{LemLHV1} shows that $(\sqrt{2}c_0)^2\leq 2$, i.e.\ that $-1\leq c_0\leq 1$, and $|c_0|=1$ is only possible if $C(\alpha,\beta)=c_0$ (i.e.\ with no angle-dependent terms). 
Now consider the case $0\leq c_0<1$. We can write
\begin{equation}
   C(\alpha,\beta)=c_0\mathbf{1}+(1-c_0)f(\alpha,\beta),
\end{equation}
where $\mathbf{1}$ is the constant function that takes the value $1$ on all angles, and $f(\alpha,\beta)=(C(\alpha,\beta)-c_0)/(1-c_0)$ is of the form of the function in Lemma~\ref{LemLHV1}. If inequality~(\ref{eqIneq}) holds, then
\begin{equation}
   \max_{\alpha,\beta}|f(\alpha,\beta)|=\frac 1 {1-c_0}\max_{\alpha,\beta}|C(\alpha,\beta)-c_0|\leq \gamma_N,
\end{equation}
and so Lemma~\ref{LemLHV1} proves that $f(\alpha,\beta)$ is a classical correlation function. Moreover, $\mathbf{1}$ is trivially a classical correlation function, and thus so must be $C(\alpha,\beta)$, which is a convex combination of the two. Then case $-1<c_0<0$ can be treated analogously, using that $-\mathbf{1}$ is a classical correlation function too.
\end{proof}
\end{lemma}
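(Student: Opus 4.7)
The plan is to reduce Lemma~\ref{lem:LHVconstant} to Lemma~\ref{LemLHV1} via a convexity argument, exploiting that a constant correlation function $c_0\mathbf{1}$ trivially admits a LHV model whenever $|c_0|\leq 1$: both parties can simply ignore their inputs and generate shared (anti)correlated bits with the appropriate marginal probability. Since the set of LHV-realizable correlation functions is closed under convex combinations (mix the hidden variables with a classical coin flip), it suffices to decompose $C$ as a convex combination of a constant and a function of the form handled by Lemma~\ref{LemLHV1}.

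First I would bound $|c_0|\leq 1$ directly: averaging $C(\alpha,\beta)$ over both angles and using Schur orthogonality (as developed in the proof of Lemma~\ref{LemLHV1}) kills every trigonometric term, giving $c_0=\frac{1}{4\pi^2}\int\!\int C\,\mathrm{d}\alpha\,\mathrm{d}\beta$, so $|c_0|\leq 1$ since $|C|\leq 1$ pointwise. The edge case $|c_0|=1$ forces $C\equiv c_0$ by Parseval (all other Fourier coefficients must vanish) and is a trivial LHV model.

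For $0\leq c_0<1$ I would write $C=c_0\mathbf{1}+(1-c_0)f$ with $f:=(C-c_0)/(1-c_0)$. By construction $f$ has no constant term and is of the form covered by Lemma~\ref{LemLHV1}, and the hypothesis $\max_{\alpha,\beta}|C-c_0|\leq\gamma_N(1-|c_0|)$ rescales exactly to $\max_{\alpha,\beta}|f|\leq\gamma_N$, so Lemma~\ref{LemLHV1} provides a LHV model for $f$; convex-combining with the LHV model for $c_0\mathbf{1}$ completes the case. For $-1<c_0<0$ I would replace $\mathbf{1}$ with $-\mathbf{1}$ (both parties output anticorrelated bits) and run the identical argument.

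Finally, the explicit constants $\gamma_N$ come from the one-variable optimization $\max_{0\leq x\leq\pi}\sqrt{2/N}(x/\pi)^{N-1}\sin(x)/\pi$ inherited from Lemma~\ref{LemLHV1}: the critical-point condition $(N-1)\sin x+x\cos x=0$ is solvable in closed form for $N=1$ (giving $x=\pi/2$ and $\gamma_1=\sqrt{2}/\pi$) and numerically for $N=2,3$. For $N\geq 4$ I would sidestep solving the optimization exactly by evaluating at the suboptimal choice $x=\pi(1-1/N)$ and verifying the elementary inequality $(1-1/N)^{N-1}\sin(\pi/N)\geq\pi/(Ne)$, which yields the clean uniform bound $\gamma_N\geq\sqrt{2}e^{-1}N^{-3/2}$. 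The main obstacle is largely bookkeeping in this last step, ensuring the inequality holds cleanly from $N=4$ onwards; the conceptual content of the lemma is modest since the genuine work has already been done in the hidden-variable construction of Lemma~\ref{LemLHV1}.
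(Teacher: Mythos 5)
Your proposal is correct and follows essentially the same route as the paper: the same convex decomposition $C=c_0\mathbf{1}+(1-c_0)f$ reducing to Lemma~\ref{LemLHV1}, the same orthogonality/Parseval argument for $|c_0|\leq 1$ and the edge case $|c_0|=1$, and the same derivation of $\gamma_N$ (exact optimization for $N=1,2,3$, suboptimal evaluation at $x=\pi(1-1/N)$ for $N\geq 4$). The only differences are cosmetic, e.g.\ you bound $|c_0|$ by direct averaging rather than by appending $1/\sqrt{2}$ to the orthonormal system.
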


\vspace*{0.5em}
\noindent \textbf{Proof of Theorem 2A.}
Consider an $\SO{2}\times\SO{2}$ box, with a correlation function in the form of \cref{eq:GenericCorr} with maximum (half-)integer $J\neq 0$.
If
\begin{equation}
   \max_{\alpha,\beta}|C(\alpha,\beta)-C_{00}|\leq\gamma_J(1-|C_{00}|),
\end{equation}
where $C_{00}$ is the angle-independent contribution to the correlation function (as in \cref{eq:GenericCorr}),
 and $\gamma_J$ is a given 
\begin{equation}
\gamma_J =  \sqrt{2} e^{-1} \left[4J\left(2J+1\right)\right]^{-\frac{3}{2}},
\end{equation}
then there is a LHV model that accounts for these correlations.
\begin{proof}
This follows as a corollary of Lemma~\ref{lem:LHVconstant}.
We convert between the form of correlations in \cref{eq:GenericCorr} and \cref{eq:IndexCorrConstant}
 by counting the maximum number $N$ of unique terms that could appear for a given positive (half-)integer $J$.
The double sum contributes $(2J+1) (4J+1)$ terms, 
 from which we remove $2J$ cases corresponding to negative $n$ where $m\!=\!0$, 
 and the one completely constant case $m\!=\!n\!=\!0$.
This gives a maximum of $N=4J(2J+1)$.
Since the lowest value ($J=\frac{1}{2}$) already yields $N=4$ unique terms, we only need the final case of \cref{eq:AppGamma},
 and hence the constant
$\gamma_J = \sqrt{2} e^{-1} \left[4J\left(2J+1\right)\right]^{-\frac{3}{2}}$.
\end{proof}

\subsection{Witnessing nonlocality}
\label{app:NonLocal}

Bell inequalities can be {\em chained} by direct addition.
For instance, suppose one takes a CHSH inequality (\cref{eq:CHSH}) with measurements $\{x_1, y_2, x_3, y_4\}$ and a second with measurements $\{x_1, y_4, x_5, y_6\}$.
Adding these together yields
$\big| \corr{x_1,y_2} + \corr{x_3,y_2} + \corr{x_3,y_4} + \corr{x_5,y_4} + \corr{x_5,y_6} - \corr{x_1,y_6} \big| \leq 4$.
This can inductively be done for a set of $N$ measurements ($\frac{N}{2}$ each for Alice and Bob), 
 leading to a chained Bell inequality, known as the {\em Braunstein--Caves inequality} (BCI)~\cite{BraunsteinC90}:
\begingroup
\addtolength{\jot}{-0.5em}
\begin{align}
\label{eq:BCI}
\Big| \corr{x_1, y_2} + \corr{x_3, y_2} + \corr{x_3, y_4} + \cdots & \nonumber \\
 + \corr{x_{N-1}, y_N} - \corr{x_1, y_N} \Big| & \leq N-2.
\end{align}
\endgroup
If such an equation is violated, then no LHV can account for these statistics\footnote{
The BCI can also be directly justified, just as the CHSH inequality. One writes $\sigma_{x_1}\left(\sigma_{y_2}-\sigma_{y_N}\right) + \sigma_{x_3}\left(\sigma_{y_2}+\sigma{y_4}\right) + \ldots + \sigma_{x_{N-1}}\left(\sigma_{y_{N-2}}+\sigma_{y_N}\right)$
for spins $\{\sigma_i \in \{+1,-1\}\}$,
and notes that if $\sigma_{y_2} = \sigma_{y_4} = \ldots = \sigma_{y_N}$, then $\sigma_{y_2} - \sigma_{y_N} = 0$. This bounds the expression to $N-2$. Convex combinations, such as \cref{eq:BCI}, cannot exceed this value.
}.

Recall, \cref{eq:GenericCorr} gives the generic $\SO{2}\times\SO{2}$ correlation function.
If we restrict ourselves to {\em relational} correlations, this amounts to setting $S_{mn}=C_{mn} = 0$ when $m\neq n$, such that the correlation function has a single parameter form
\begin{equation}
\label{eq:RelGeneric}
\corr{\beta-\alpha} = \sum_{m=0}^{2J} C_{m} \cos\left[m\left(\beta- \alpha\right)\right] + S_{m} \sin\left[m\left(\beta- \alpha\right)\right]
\end{equation}
where $J$ is some positive (half-)integer, and $C_m := C_{mm}$, $S_m := S_{mm}$.

\begin{lemma}
\label{lem:Relational}
Consider relational $\SO{2}\times\SO{2}$ correlations (of the form of \cref{eq:RelGeneric}) for finite positive \mbox{(half-)}integer $J$.
If there is some $\Theta_+ \in [0,2\pi)$ such that $C\!\left(\Theta_+\right) = +1$,
 and $\Theta_- \in [0,2\pi)$ where $C\!\left(\Theta_-\right) = -1$,
 then there exists a BCI that demonstrates a Bell violation.
\begin{proof}
We show this by construction.
For even $N$, define 
\begin{equation}
\delta_N := \dfrac{\left(\Theta_- - \Theta_+\right)\mod 2\pi}{N-1}.
\end{equation}
We use the notation ``$x$ mod $2\pi$'' to indicate $x - 2\pi n$ where $n\in\ints$ is chosen such that $x-2\pi n \in [0,2\pi)$, mapping the angle to the principal range.

\begin{figure}[bth]
\includegraphics[width=0.225\textwidth]{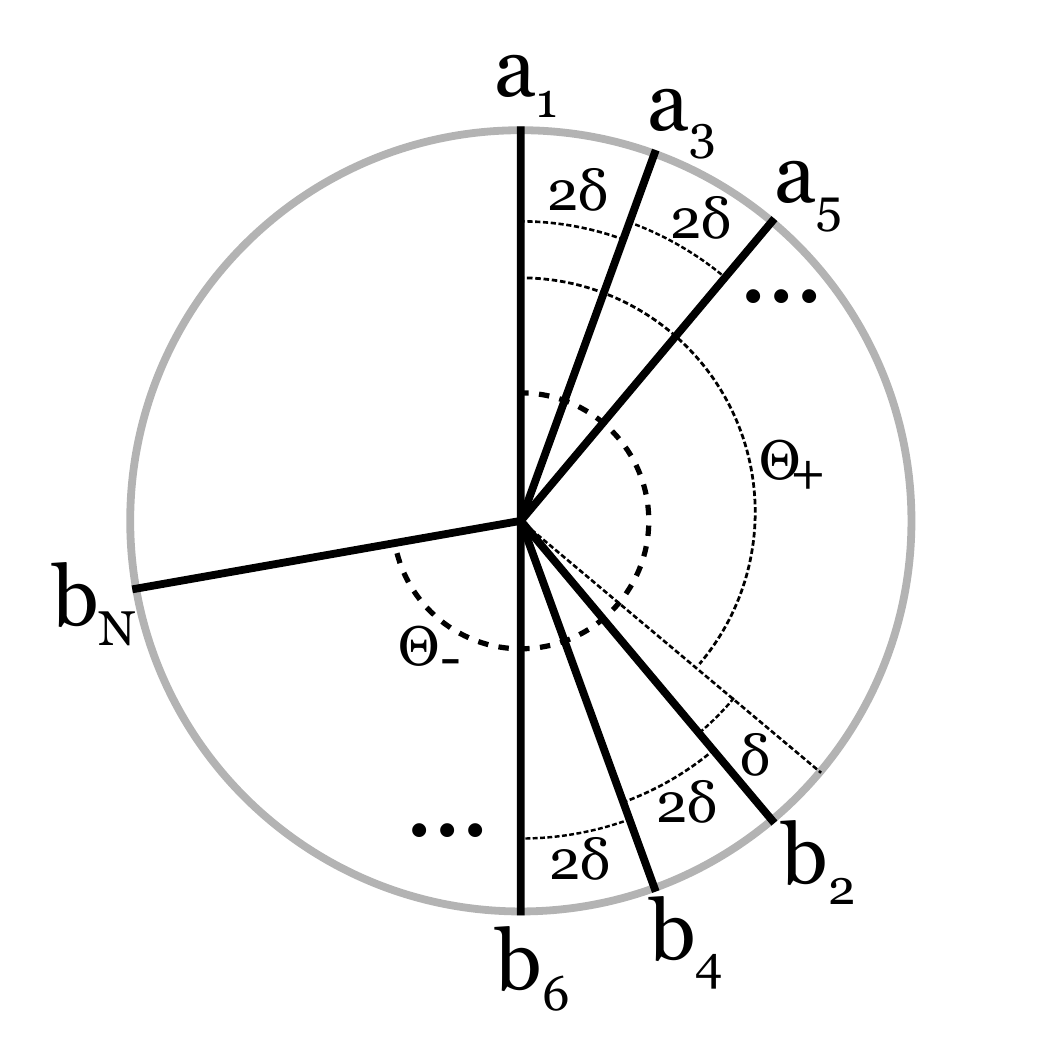}
\caption{
\label{fig:GenericBCI}
\caphead{Measurement angles for generic correlation function.}
The first choice of measurement angles are chosen such that $b_2 - a_1 = \Theta_+ + \delta$.
Subsequent choices then precess by $2\delta$, such that ultimately $b_N-a_1 = \Theta_-$. 
}
\end{figure}

We construct a $N$-measurement BCI, as defined in \cref{eq:BCI}.
Since the correlation function is relational, we write $\corr{\alpha,\beta}$ as the single parameter function $\corr{\beta-\alpha}$,
 and assign the measurement settings:
\begin{align}
\label{eq:RelAssignments}
a_i & = (i-1)\delta_N & \mathrm{for~odd}~ i, \nonumber \\
b_i & = \Theta_+ + (i-1)\delta_N & \mathrm{for~even~} i.
\end{align}
(Illustrated in \cref{fig:GenericBCI}.)
This amounts to setting the arguments of the correlation functions featured in the BCI to
\begin{align}
b_2 - a_1 & = \ldots = b_{2m}-a_{2m-1} = \Theta_+ + \delta_N, \nonumber \\
b_2 - a_3 & = \ldots = b_{2m}-a_{2m+1} = \Theta_+ - \delta_N, \nonumber \\
b_N - a_1 & = \Theta_-,
\end{align}
where equality is taken modulo $2\pi$.
 
 With such assignments, the BCI is then written:
\begin{equation}
\label{eq:moreGenBCI}
\frac{N}{2} C\!\left(\Theta_+ + \delta_N\right) 
+ \left(\frac{N}{2} -1\right) C\!\left(\Theta_+ - \delta_N\right)
- C\left(\Theta_-\right) \leq N-2.
\end{equation}
Recall that $C\left(\Theta_-\right) = -1$.
$C\!\left(\Theta_+\right) = +1$ must be a local maximum, and a finite $J$ allows us to assume the function $C$ is smooth at this point.
Thus, in the limit of small $\delta_N$, $C(\Theta_+\pm\delta_N)=1-k{\delta_N}^2+\order{{\delta_N}^3}$ where $k\geq 0$ is some constant.
We then rewrite \cref{eq:moreGenBCI} as
\begin{align}
\left(N-1\right) \left(1 - k {\delta_N}^2+\order{{\delta_N}^3}\right) + 1 & \leq N-2, \nonumber\\
    N+\order{(N-1)^{-1}}&\leq N-2,
\end{align}
which for large enough $N$ will eventually be violated.
\end{proof}
\end{lemma}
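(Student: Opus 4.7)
The plan is to build an explicit Braunstein--Caves chained Bell inequality (BCI) with $N$ measurements per party whose violation follows from the two assumed correlation values together with the smoothness of $C$ forced by the spin bound $J$. The BCI bound of $N-2$ leaves a gap of $2$ relative to the algebraic maximum $N$, so the strategy is to arrange angles so that the $N-1$ ``adjacent'' terms are (nearly) perfectly correlated while the single ``wraparound'' term is perfectly anti-correlated, and then argue that the small deficit incurred by the ``nearly'' disappears as $N$ grows.

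Concretely, I would pick a small angular increment $\delta_N := (\Theta_- - \Theta_+)/(N-1)$ and set Alice's odd-indexed settings to $\alpha_{i}=(i-1)\delta_N$ and Bob's even-indexed settings to $\beta_i=\Theta_++(i-1)\delta_N$. A short computation using relationality shows that all ``adjacent'' BCI arguments $\beta_{2i}-\alpha_{2i-1}$ and $\beta_{2i}-\alpha_{2i+1}$ differ from $\Theta_+$ by at most $\delta_N$, while the ``wraparound'' argument $\beta_N-\alpha_1$ is exactly $\Theta_-$. Substituting into \cref{eq:BCI} reduces the claim to the asymptotic estimate $(N-1)\bigl[1-\order{\delta_N^2}\bigr]-(-1)>N-2$.

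The key analytic ingredient is controlling $C(\Theta_+\pm\delta_N)$ near the maximum. Because $|C|\leq 1$ and $C(\Theta_+)=+1$, the point $\Theta_+$ is necessarily a stationary maximum. Relationality plus the finite-$J$ bound means $C$ is a trigonometric polynomial of degree at most $2J$, hence real-analytic, with second derivative at $\Theta_+$ bounded by a constant that depends only on $J$ (e.g.\ via a Bernstein-type inequality applied to the frequencies $m\leq 2J$). Taylor expansion then gives $C(\Theta_+\pm\delta_N)=1-k\delta_N^2+\order{\delta_N^3}$ with $k\geq 0$ and $k$ bounded by a $J$-dependent constant, so the $N-1$ ``near'' terms each contribute at least $1-\order{N^{-2}}$. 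The chained Bell functional therefore evaluates to at least $N-\order{N^{-1}}$, exceeding $N-2$ as soon as $N$ is sufficiently large.

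The main obstacle I expect is purely technical rather than conceptual: making the Taylor remainder uniform enough that one can identify an explicit finite $N$ depending only on $J$ at which the violation sets in. This amounts to turning the qualitative statement ``$C$ is smooth at its maximum'' into a quantitative curvature bound; the finiteness of $J$ is precisely what enables this, by restricting $C$ to a finite-dimensional space of trigonometric polynomials whose derivatives at any point are controlled by the coefficient magnitudes (themselves bounded via $\|C\|_\infty\leq 1$ and Schur orthogonality). Once that quantitative bound is in hand, the rest is elementary, and the same argument sketched above yields a concrete violating $N$.
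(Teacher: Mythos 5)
Your proposal is correct and follows essentially the same construction as the paper: the same arithmetic progression of settings with increment $\delta_N=(\Theta_--\Theta_+)/(N-1)$, the same reduction of the chained inequality to $(N-1)\bigl(1-k\delta_N^2+\order{\delta_N^3}\bigr)+1\leq N-2$, and the same appeal to smoothness of the trigonometric polynomial $C$ at its maximum. The quantitative curvature bound you flag as the remaining technicality is exactly what the paper supplies separately (via Schur orthogonality bounding the coefficients and hence the second derivative) when it upgrades this lemma to the noise-robust version.
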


\enlargethispage{\baselineskip}
The above construction can be shown to have some robustness to noise (tolerating a smaller maximum value than $1$).
To show this, we first prove an auxilary lemma:
\begin{lemma}
\label{lem:MaxSecondDeriv}
Consider relational $\SO{2}\times\SO{2}$ correlations (of the form of \cref{eq:RelGeneric}) for finite \mbox{(half-)}integer $J$.
Expressed as a single parameter function $\corr{\beta-\alpha}$, the second derivative is everywhere bounded:
\begin{equation}
\label{eq:SecondDerivBound}
\left| \frac{{\rm d}^2\, \corr{\beta-\alpha} }{{\rm d}\!\left(\beta-\alpha\right)^2} \right| \leq \frac{\sqrt{2} J\left(2J+1\right)\left(4J+1\right)}{3}.
\end{equation}
\begin{proof}
In polar form, the correlation function is:
\begin{equation}
\label{eq:CorrPolar}
\corr{\beta-\alpha} = A_0 + \sum_{m=1}^{2J} A_{m} \cos\left(m\left(\beta- \alpha\right) - \phi_m\right),
\end{equation}
where $A_0=C_0$, $A_m=\sqrt{C_m^2+S_m^2}$, and
\begin{equation}
\phi_m=\left\{\begin{array}{cl}
\arctan(S_m/C_m) & \mbox{if }C_m\neq 0\\
\pi/2 & \mbox{if }C_m=0,S_m\geq 0\\
-\pi/2 & \mbox{if }C_m=0,S_m<0.
\end{array}
\right.
\end{equation}
The second derivative with respect to $\beta-\alpha$ is:
\begin{equation}
\label{eq:SecondDeriv}
\frac{{\rm d}^2}{{\rm d}\!\left(\beta-\alpha\right)^2} \corr{\beta-\alpha} = \sum_{m=1}^{2J} - m^2 A_{m} \cos\left[m\left(\beta- \alpha\right) - \phi_m\right].
\end{equation}
Because $\corr{\beta-\alpha}$ is bounded everywhere to $[-1,1]$, its $L^2$ norm $\|C\|^2 := \frac{1}{2\pi}\int_0^{2\pi} d\theta \,  |C(\theta)|^2$ is bounded within $[0,1]$.
Thus we may determine a maximum value over all $m$ for the amplitude $A_m$.
Using the orthonormality of the functions $1$, $\sqrt{2}\cos(m\theta+a)$, and $\sqrt{2}\cos(n\theta+b)$ under the corresponding inner product $\langle f,g\rangle:=\frac 1 {2\pi}\int_0^{2\pi} d\theta f(\theta)g(\theta)$, we get $\|C\|^2=A_0^2+\frac 1 2 \sum_{m=1}^{2J}A_m^2$. Since this is upper-bounded by $1$, we get $A_m^2\leq 2$ for all $m\geq 1$, and so
\begin{equation}
\label{eq:SecondDerivBound}
\left| \frac{{\rm d}^2 C\!\left(\beta-\alpha\right) }{{\rm d}\!\left(\beta-\alpha\right)^2} \right| \leq \sum_{m=1}^{2J} \sqrt{2} m^2 = \frac{\sqrt{2} J\left(2J+1\right)\left(4J+1\right)}{3}.
\end{equation}%
\end{proof}
\end{lemma}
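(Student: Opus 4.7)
My plan is to expand $C$ in polar form, differentiate termwise, and reduce the problem to bounding individual Fourier amplitudes using the pointwise constraint $|C|\leq 1$.

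First, I would rewrite each pair $C_m\cos(m\theta) + S_m\sin(m\theta)$ as a single cosine $A_m\cos(m\theta - \phi_m)$ with amplitude $A_m := \sqrt{C_m^2 + S_m^2}$ and a suitable phase $\phi_m$ (with $\theta := \beta-\alpha$). Since $C$ is a finite trigonometric polynomial of degree $2J$, termwise differentiation is immediate, and the triangle inequality yields $|C''(\theta)| \leq \sum_{m=1}^{2J} m^2 A_m$. The substantive remaining task is to bound each amplitude $A_m$.

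For this I would invoke Schur orthogonality on $\SO{2}$: under the normalized inner product $\langle f,g\rangle := (2\pi)^{-1}\int_0^{2\pi} f(\theta)g(\theta)\,d\theta$, the family $\{1,\ \sqrt{2}\cos(m\theta - \phi_m)\}_{m\geq 1}$ is orthonormal. Parseval's identity then gives $\|C\|^2 = A_0^2 + \tfrac{1}{2}\sum_{m\geq 1} A_m^2$, and since $|C(\theta)|\leq 1$ pointwise forces $\|C\|^2\leq 1$, each mode individually satisfies $A_m\leq\sqrt{2}$ for $m\geq 1$. Combining with the closed form $\sum_{m=1}^{2J} m^2 = J(2J+1)(4J+1)/3$ yields the stated inequality.

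The delicate point is the coefficient bound. The argument above exploits only the pointwise estimate $A_m\leq\sqrt{2}$, neglecting the joint Parseval constraint $\sum_m A_m^2\leq 2$; one could in principle achieve a sharper $O(J^2)$ bound by invoking Bernstein's inequality for trigonometric polynomials of degree $2J$, but the cruder route suffices for the stated cubic-in-$J$ form, which is what the downstream robustness analysis for nonlocality witnessing (via $\varepsilon_J$ in \cref{eq:StrongBound}) actually requires.
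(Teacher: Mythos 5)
Your proposal is correct and follows essentially the same route as the paper's proof: polar form with amplitudes $A_m=\sqrt{C_m^2+S_m^2}$, termwise differentiation and the triangle inequality, then Parseval's identity under the normalized inner product to deduce $A_m\leq\sqrt{2}$ from $\|C\|^2\leq 1$, and finally the closed form $\sum_{m=1}^{2J}m^2=J(2J+1)(4J+1)/3$. Your remark that the bound is deliberately crude (using only the per-mode estimate rather than the joint constraint or Bernstein's inequality) is accurate but does not change the argument.
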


\begin{lemma}
\label{lem:NoisyRelational}
Consider relational $\SO{2}\times\SO{2}$ correlations (of the form of \cref{eq:RelGeneric}) for finite positive \mbox{(half-)}integer $J$.
Let there be some angle difference $\Theta_+$, where $\corr{\Theta_+} \geq 1 - \varepsilon$, 
 and some other angle difference $\Theta_-$  where $\corr{\Theta_-} \leq 1-\Delta$, with $\varepsilon\geq 0$ and $\Delta>0$. If 
\begin{align}
\label{eq:EpApp}
\varepsilon<- K_J+\sqrt{K_J^2+\frac{\Delta^2}4}\quad = \frac{\Delta^2}{8 K_J}+\mathcal{O}(K_J^{-2}),
\end{align}
where $K_J=\sqrt{2}\pi^2 J(2J+1)(4J+1)/3$, then there will be a BCI that is violated.

\begin{proof}
First, since the correlation function $C$ is continuous, it attains its global maximum at some $\Theta'_+$. Since $C(\Theta'_+)\geq C(\Theta_+)\geq 1-\varepsilon$, the premises of this lemma are also satisfied if $\Theta_+$ is replaced by $\Theta'_+$ -- i.e., we can assume without loss of generality that $C$ attains its global maximum at $\Theta_+$.

With these $\Theta_+$ and $\Theta_-$, we use the prescription in Lemma~\ref{lem:Relational},
 with the angle choices in \cref{eq:RelAssignments} to generate the following BCI, which must hold for all even integers $N\geq 2$ if there exists a LHV model:
\begin{equation}
\frac{N}{2} \corr{\Theta_+ + \delta_N}
+ \left(\frac{N}{2} -1\right) \corr{\Theta_+ - \delta_N}
- \corr{\Theta_-} \leq N-2.
\label{eqBCI}
\end{equation}
Let us write $\delta\Theta := \Theta_- - \Theta_+ \mod 2\pi$, such that $\delta_N = \frac{\delta\Theta}{N-1}$. 
Let $K\in\reals$ be any constant such that $C''(x)\geq -K$ for all $x$; it follows from Lemma~\ref{lem:MaxSecondDeriv} that such $K$ exists, and we will fix $K$ later in accordance with that lemma. Since $\Theta_+$ is a local maximum, we know that $0\geq C''(\Theta_+)\geq -K$, i.e.\ $K\geq 0$. The global bound on the second derivative of $C$ gives us
\begin{equation}
   C(\Theta_+\pm x)\geq C(\Theta_+)-K\frac{x^2}2 \mbox{ for all }x.
\end{equation}
Thus, eq.~(\ref{eqBCI}) implies
\begin{equation}
   (N-1)\left(1-\varepsilon-K\frac{\delta_N^2}2\right)-(1-\Delta)\leq N-2.
\end{equation}
Under what conditions does there exist an even integer $N\geq 2$ such that this inequality is violated, i.e.\ the existence of a LHV model is ruled out? The negation of this inequality can be rearranged into a quadratic equation in $(N-1)$:
\begin{equation}
   \varepsilon(N-1)^2-\Delta(N-1)+\frac K 2 (\delta\Theta)^2<0.
   \label{eqImpliesNonloc}
\end{equation}
If this equation has a solution for some even integer $N$, then the non-existence of a LHV model follows.
If $\varepsilon=0$, then there will always be a solution for large enough $N$, recovering Lemma~\ref{lem:Relational}.
Thus, we here only give further consideration to the case where $\varepsilon>0$.

Since this quadratic function in $(N-1)$ is positive for large values of $\pm (N-1)$, it is necessary for the existence of a negative value that this function has zeroes over the real numbers. The zeroes are
\begin{equation}
   N_\pm-1=\frac{\Delta\pm\sqrt{\Delta^2-2\varepsilon K (\delta\Theta)^2}}{2\varepsilon},
\end{equation}
and so the following inequality is necessary for the existence of a solution of eq.~(\ref{eqImpliesNonloc}):
\begin{equation}
   \Delta^2> 2\varepsilon K (\delta\Theta)^2.	
   \label{eqNecessary}
\end{equation}
If it is satisfied, then the values of $(N-1)_{\pm}$ are well-defined, and we can continue to argue as follows. The quadratic function in eq.~(\ref{eqImpliesNonloc}) is negative for all real numbers $N\in(N_-,N_+)$, where $0<N_-<N_+$. Now, this interval definitely contains an even integer $N$ if $N_+-N_->2$. Since $N_+-N_-=\sqrt{\Delta^2-2\varepsilon K (\delta\Theta)^2}/\varepsilon$, this difference is larger than two if and only if
\begin{equation}
   4\varepsilon^2+2\varepsilon K (\delta\Theta)^2-\Delta^2<0.
\end{equation}
The two solutions of the corresponding quadratic equation are
\begin{equation}
   \varepsilon_{\pm}=\frac{-K(\delta\Theta)^2\pm \sqrt{K^2(\delta\Theta)^4+4\Delta^2}}4.
\end{equation}
They are both real, and $\varepsilon_-<0<\varepsilon_+$. Thus, $\varepsilon<\varepsilon_+$ implies a suitable solution of eq.~(\ref{eqImpliesNonloc}), i.e.\ rules out the existence of a LHV model.

In fact, if $\varepsilon<\varepsilon_+$, then we automatically get
\begin{equation}
4\varepsilon < -K(\delta\Theta)^2+K(\delta\Theta)^2\sqrt{1+\frac{4\Delta^2}{K^2(\delta\Theta)^4}}
\leq \frac{2\Delta^2}{K(\delta\Theta)^2},
\end{equation}
i.e.\ eq.~(\ref{eqNecessary}) is automatically satisfied. Now, considering $\varepsilon_+$ as a function in $\delta\Theta$, this function is decreasing for $\delta\Theta>0$. Since $\delta\Theta\leq 2\pi$, $\varepsilon<\varepsilon_+(2\pi)$ implies $\varepsilon<\varepsilon_+(\delta\Theta)\equiv \varepsilon_+$. Thus, the inequality
\begin{equation}
   \varepsilon<\frac{-K(2\pi)^2+\sqrt{K^2(2\pi)^4+4\Delta^2}}4
\end{equation}
implies a violation of a BCI. The statement of the lemma now follows from taking the value of $K$ from Lemma~\ref{lem:MaxSecondDeriv}, and by substituting $K_J:=\pi^2 K$.
\end{proof}
\end{lemma}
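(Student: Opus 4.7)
The plan is to apply the Braunstein--Caves construction from Lemma~\ref{lem:Relational}, but with the refinement of accounting for a non-unit maximum $1-\varepsilon$ at $\Theta_+$ and a non-minimal value $1-\Delta$ at $\Theta_-$. First I would note that by continuity $C$ attains a global maximum at some $\Theta'_+$ with $C(\Theta'_+)\ge 1-\varepsilon$, so without loss of generality $\Theta_+$ is itself this global maximum. This ensures $C'(\Theta_+)=0$ and, crucially, that the second derivative of $C$ is globally bounded in magnitude by a constant $K$ via Lemma~\ref{lem:MaxSecondDeriv}. A Taylor expansion with Lagrange remainder then yields the \emph{global} lower bound $C(\Theta_+\pm x)\ge C(\Theta_+) - Kx^2/2 \ge 1-\varepsilon - Kx^2/2$ valid for every $x$, not just near $\Theta_+$.

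Next, using exactly the chained measurement angles of Lemma~\ref{lem:Relational} with even $N$ and step $\delta_N = (\Theta_-\!-\!\Theta_+ \bmod 2\pi)/(N-1)$, the associated BCI takes the form
\begin{equation}
\frac{N}{2} C(\Theta_+\!+\!\delta_N) + \left(\frac{N}{2}-1\right) C(\Theta_+\!-\!\delta_N) - C(\Theta_-) \le N-2.
\end{equation}
Substituting the Taylor lower bound together with the hypothesis $C(\Theta_-)\le 1-\Delta$ shows that this inequality is violated whenever
\begin{equation}
\varepsilon(N-1)^2 - \Delta(N-1) + \tfrac{K}{2}(\delta\Theta)^2 < 0,
\end{equation}
where $\delta\Theta := \Theta_-\!-\!\Theta_+ \bmod 2\pi$. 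Thus it suffices to show that this quadratic in $(N-1)$ is negative at some even integer $N$.

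The key technical step is converting this real-analytic condition into one that guarantees such an even integer exists. For the quadratic to possess real roots $N_\pm$ at all one needs $\Delta^2 > 2\varepsilon K (\delta\Theta)^2$; for the interval $(N_-, N_+)$ to contain an even integer it is sufficient that $N_+ - N_- > 2$, which rearranges to $4\varepsilon^2 + 2\varepsilon K(\delta\Theta)^2 - \Delta^2 < 0$. Solving this quadratic inequality in $\varepsilon$ yields
\begin{equation}
\varepsilon < \frac{-K(\delta\Theta)^2 + \sqrt{K^2(\delta\Theta)^4 + 4\Delta^2}}{4},
\end{equation}
and one can directly check that this bound automatically implies the weaker root-existence condition above. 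Finally, because the right-hand side is monotonically decreasing in $\delta\Theta$ on $(0,2\pi]$, replacing $\delta\Theta$ by its worst case $2\pi$ and setting $K_J := \pi^2 K$ with $K$ from Lemma~\ref{lem:MaxSecondDeriv} reproduces the claimed bound $\varepsilon < -K_J+\sqrt{K_J^2+\Delta^2/4}$.

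The main obstacle --- and what produces the quadratic-in-$\varepsilon$ correction rather than the naive linear bound $\varepsilon \lesssim \Delta^2/(8K_J)$ --- is precisely this integer-versus-real subtlety: the quadratic having real roots is necessary but not sufficient for a valid Bell witness, and the requirement that the root separation exceed $2$ is what forces the square-root form. A secondary subtlety is ensuring the second-derivative bound is truly global so that the Taylor estimate applies even when $\delta\Theta$ is large (up to $2\pi$), which is exactly what Lemma~\ref{lem:MaxSecondDeriv} delivers via the spin cutoff $J$.
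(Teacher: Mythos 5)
Your proposal is correct and follows essentially the same route as the paper's proof: the same WLOG reduction to a global maximum, the same global second-order Taylor bound from Lemma~\ref{lem:MaxSecondDeriv}, the same quadratic in $(N-1)$, and the same root-separation condition $N_+-N_->2$ followed by the worst-case substitution $\delta\Theta=2\pi$. The only cosmetic omission is that the degenerate case $\varepsilon=0$ (where the root formula $N_\pm$ involves division by $\varepsilon$) should be set aside separately, as it reduces immediately to Lemma~\ref{lem:Relational}.
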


This has consequence for generic (possibly non-relational) $\SO{2}\times\SO{2}$ settings.

\vspace{0.5em} \noindent \textbf{Proof of Theorem 2B.}
{\em
Consider $\SO{2}\times\SO{2}$ correlations $C$ for finite maximum \mbox{(half-)}integer $J$.
Let $C_{\rm rel}$ the relational core of $C$ (that is, the function of the form \cref{eq:RelGeneric} formed by only including terms of \cref{eq:GenericCorr} where $m=n$). 
Let there be some angle difference $\Theta_+$, where $C_{\rm rel}\!\left(\Theta_+\right) \geq 1 - \varepsilon$, 
 and some other angle difference $\Theta_-$  where $C_{\rm rel}\!\left(\Theta_-\right) \leq 1-\Delta$, with $\varepsilon\geq 0$ and $\Delta>0$. If 
\begin{align}
\varepsilon<- K_J+\sqrt{K_J^2+\frac{\Delta^2}4} \; = \; \frac{\Delta^2}{8 K_J}+\mathcal{O}(K_J^{-2}),
\end{align}
where $K_J=\sqrt{2}\pi^2 J(2J+1)(4J+1)/3$, then there will be a BCI for the (possibly non-relational) correlation function $C$ that is violated.
}
\begin{proof}
Subtracting the ``non-relational'' parts of $\corr{\alpha,\beta}$ is equivalent to performing the following integration:
\begin{equation}
C_{\rm rel}\!\left(\alpha, \beta\right) = \frac{1}{2\pi} \int_0^{2\pi} d\phi \; \corr{\alpha+\phi, \beta+\phi}
\end{equation}
This may be directly verified by noting that terms of the form $\cos\left(m\alpha- n\beta + \left(m-n\right)\phi\right)$ and 
$\sin\left(m\alpha- n\beta + \left(m-n\right)\phi\right)$ individually integrate to $0$ over $\phi$ except when $m=n$.
This allows us to interpret taking the relational core of a correlation function as mixing $C$ over many settings offset by a shared uniform random angle.

It then follows from the convexity of Bell inequalities that if the BCI implied by Lemma~\ref{lem:NoisyRelational} for the relational core is violated ``on average'' for this mixture of settings,
 there must be at least one single set of input settings that also results in that BCI being violated.
\end{proof}

\subsection{Necessity of a bound on $J$}
\label{app:Necessity}
We will now show that our protocol for witnessing nonlocality does not work if we simply drop the assumption that $J$ is finite.

A correlation function $C(\alpha,\beta)$ has a LHV model if and only if there exists a variable $\lambda\in\Lambda$, distributed via some $P_{\Lambda}(\lambda)$, and a family of local response functions $C_A(\alpha,\lambda) := \sum_{a\in\{-1,+1\}} a\, P_A(a|\alpha,\lambda)$ and $C_B(\beta,\lambda) := \sum_{b\in\{-1,+1\}} b\, P_B(b|\beta,\lambda)$ such that
\begin{equation}
   C(\alpha,\beta)=\int_\Lambda d\lambda P_\Lambda(\lambda) C_A(\alpha,\lambda) C_B(\beta,\lambda).
   \label{DefLocalC}
\end{equation}
Suppose that there are two angles $\theta'_+,\theta'_-$ such that our protocol gives correlation values $C_{\rm rel}(\theta'_{\pm})$ very close to $\pm 1$. 
If this is the \emph{only} experimental syndrome, without further assumptions on the form of the correlation function (in particular, without any assumption on $J$ as explained in the main text), then this experimental behavior can always be reproduced to arbitrary accuracy by local hidden variables. 
Namely, $\theta'_\pm$ can be arbitrarily well approximated by angles $\theta_\pm$ that satisfy the premises of the following lemma:

\begin{lemma}
Suppose that $\theta_+,\theta_-\in[0,2\pi)$ are such that $\theta_--\theta_+=\frac m n \pi$, where $n\in\mathbb{N}$ and $m\leq n$ is an odd integer. 
Then there exists a local relational correlation function $C(\alpha,\beta)\equiv C(\alpha-\beta)$ such that
\begin{align}
   C(\theta_+)=+1 \mbox{ and } C(\theta_-)=-1.
\end{align}
\end{lemma}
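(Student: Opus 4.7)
My plan is to construct an explicit deterministic LHV model built from a shared uniformly distributed angle together with a $\pm 1$-valued square-wave response function of half-period $\pi/n$. The key arithmetic observation is that the separation $\delta:=\theta_--\theta_+=m\pi/n$ is, because $m$ is odd, an odd number of such half-periods, hence a shift that flips the sign of the square wave. This should convert perfect correlation at angle difference $\theta_+$ into perfect anti-correlation at $\theta_-$.

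Concretely, I would take the hidden variable $\lambda$ uniform on $[0,2\pi)$, define $g:\reals\to\{-1,+1\}$ by $g(\phi):=(-1)^{\lfloor n\phi/\pi\rfloor}$ extended $2\pi$-periodically, and set the deterministic local responses
\[ A(\alpha,\lambda):=g(\lambda-\alpha),\qquad B(\beta,\lambda):=g(\lambda-\beta+\theta_+). \]
Both outputs lie in $\{-1,+1\}$, so this is a bona fide LHV. A change of variable $\mu:=\lambda-\alpha$ in the induced correlation integral shows that $C(\alpha,\beta)$ depends only on $\beta-\alpha$, so $C$ is automatically relational.

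It then remains to evaluate $C$ at the two target angles. At $\beta-\alpha=\theta_+$ the shift inside $B$ exactly cancels the one inside $A$, so the integrand $g(\mu)g(\mu)\equiv 1$ gives $C(\theta_+)=+1$. At $\beta-\alpha=\theta_-$ the integrand becomes $g(\mu)g(\mu-\delta)$; here I would invoke the half-period antisymmetry of $g$, namely that crossing one arc boundary of width $\pi/n$ flips the sign, so iterating $m$ times yields $g(\phi-m\pi/n)=(-1)^m g(\phi)=-g(\phi)$ almost everywhere. The integrand is therefore $-g(\mu)^2=-1$ a.e., giving $C(\theta_-)=-1$.

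I expect the only mild technical point to be the measure-zero set of arc endpoints at which the square wave jumps, but this is inconsequential for the integral and can if desired be absorbed by shifting the definition of $g$ slightly. Everything else reduces to the substitution above together with the oddness of $m$, so the ``hard part'' is really just isolating the correct square wave with the right symmetry; once that is in place the verification is mechanical.
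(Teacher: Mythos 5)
Your construction is correct and is essentially the same as the paper's: a uniformly distributed shared angle, deterministic $\pm1$ square-wave responses of half-period $\pi/n$, relationality via the shift substitution, and the observation that a shift by an odd multiple $m\pi/n$ of the half-period flips the sign of the wave. The only cosmetic differences are that you encode the offset $\theta_+$ in Bob's response rather than normalizing $\theta_+=0$, and you write the square wave via a floor function instead of an explicit union of intervals.
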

\begin{proof}
We set $\Lambda=[0,2\pi)$ and $P_\Lambda(\lambda)=\frac{1}{2\pi}$ -- the uniform measure on this interval. 
Without loss of generality, assume that $\Theta_- > \Theta_+$ and $\Theta_+ = 0$ (we can choose our local coordinates $\alpha$, $\beta$ to make this the case).
Define $C_A(x,\lambda) = C_B(x,\lambda) := f(x+\lambda)$, where $f:\mathbb{R}\to\{-1,+1\}$ is the $2\pi$-periodic extension of
\begin{align}
   f(x) & := \left\{\begin{array}{cl}
      +1 & \mbox{if }\frac x \pi \in \left[0,\frac 1 n\right) \cup \left[\frac 2 n,\frac 3 n\right)\cup\ldots\cup\left[\frac{2n-2}n,\frac{2n-1}n\right) \\
      -1 & \mbox{for all other }x\in[0,2\pi).
\end{array}
\right.
\end{align}
That is, $f$ is a square-wave function of period $\frac{2\pi}{n}$.
This $f$ is piecewise continuous and satisfies $f\left(x+\frac\pi n\right)=-f(x)$ for all $x$. Thus, $f\left(x+\frac 3 n \pi\right)=-f\left(x+\frac 2 n \pi\right)=f\left(x+\frac 1 n \pi\right)=-f(x)$ for all $x$, and in particular, by induction, $f\left(x+\frac m n \pi\right)=-f(x)$ for all $x$ since $m$ is odd by assumption. Now, defining $C(\alpha,\beta)$ as in~(\ref{DefLocalC}), this correlation function is relational, since
\begin{align}
C(\alpha+x,\beta+x)&=\frac 1 {2\pi} \int_0^{2\pi} f(\alpha+x+\lambda)f(\beta+x+\lambda)\, d\lambda \nonumber \\
&= \frac 1 {2\pi} \int_0^{2\pi} f(\alpha+\lambda')f(\beta+\lambda')\,d\lambda' \nonumber \\
&= C(\alpha,\beta)
\end{align}
by substitution and due to the $(2\pi)$-periodicity of $f$. Furthermore,
\begin{align}
C(\theta_+)&=C\left(0,0\right)=\frac 1 {2\pi} \int_0^{2\pi}\left(f(\lambda)\right)^2 \, d\lambda=1,\nonumber \\
C(\theta_-)&= C\left(\frac m n \pi,0\right)=\frac 1 {2\pi}\int_0^{2\pi} f\left(\frac m n \pi+\lambda\right)f(\lambda)\,d\lambda\nonumber \\
&=-\frac 1 {2\pi} \int_0^{2\pi}\left(f(\lambda)\right)^2 \, d\lambda=-1.
\end{align}
\end{proof}
Therefore, simply following our protocol but relaxing our assumption on $J$ (while not imposing any other assumptions) cannot be sufficient to certify nonlocality.

\section{Characterizing quantum correlations}
\label{app:Vector}
Let us consider black boxes that have a particularly simple transformation behaviour under rotations:
\begin{definition}[Transforming fundamentally]
\label{def:TransFund}
Consider an $\SO{d}$-box $\cPr{a}{x}$, where $d\geq 2$.
Let $x_0\in\mathcal{X}$. We say that this box \emph{transforms fundamentally} under rotations if for all $x \in \mathcal X$ and all $R_x \in \mathcal G$ with $R_x x_0 = x$ one finds
\begin{equation}
\label{eq:TransFund}
\cPr{a}{x}\equiv \cPr{a}{R_x x_0}=c_0^{a}+\sum_{i,j=1}^d (R_x)_{i,j} c_{i,j}^{a},
\end{equation}
 where $\left(R_x\right)_{i,j}$ is the fundamental matrix representation of $R_x\in\mathcal{G}$, and $c_0^{a}, c_{i,j}^{a}$ are constants independent of $x$ and $R_x$.
\end{definition}
Equivalently, a black box transforms fundamentally if the corresponding representation $R\mapsto T_R$ from Theorem~\ref{thm:Rep} can be chosen as a direct sum of copies of the trivial and the fundamental representations of $\mathcal{G}=\SO{d}$.
Since $\mathcal{G}$ is transitive on $\mathcal{X}$,
 the existence of the above representation is independent of the particular $x_0$: 
  any alternative $x_0' \in \mathcal{X}$ satisfies $x_0' = Sx_0$ for some $S\in\mathcal{G}$, 
  satisfying the above with $R' = RS^{-1}$.

\begin{lemma}
\label{lem:Bloch}
Suppose that $\mathcal{X}=S^{d-1}$ (the unit sphere), and $\cPr{a}{\vec{x}}$ transforms fundamentally under rotations in $\mathcal{G}=\SO{d}$. 
Then, 
\begin{equation}
\label{eq:GenericBloch}
 \cPr{a}{\vec{x}}=c_0^{a}+\vec c^{\,a}\cdot \vec x,
\end{equation}
where constants $c_0^{a}\in\reals$ and $\vec c^{\,a}\in\reals^d$ satisfy
\begin{equation}
\label{eq:GBNormConst}
\sum_{a\in\mathcal{A}} c_0^{a} = 1,\quad \sum_{a\in\mathcal{A}} \vec{c}^{\,a} = \vec{0} 
\end{equation}
such that for all $a$, $c_0^{a}\geq 0$ and $|\vec{c}^{\,a}| \leq \min\left(c_0^a,1-c_0^a\right)$.

Conversely, if a black box has this form, then it transforms fundamentally under rotations.

In other words, an $\SO{d}$-box transforms fundamentally if and only if $\cPr{a}{\vec{x}}$ is affine-linear in $\vec{x}$ (and non-negativity and normalization of probabilities holds).

\begin{proof}
Set $\vec x_0:=\vec e_1=(1,0,\ldots,0)\trans$. 
Fix some choice of rotations $\vec x \mapsto R_{\vec x}$ with $R_{\vec x}\vec x_0=\vec x$. Consider the stabilizer subgroup $\mathcal{G}_{\vec e_1}:=\{R\in\SO{d}\,\,|\,\, R\vec e_1=\vec e_1\}$:
\begin{align}
   \mathcal{G}_{\vec e_1} 
   &=\left\{\left(\begin{array}{cc} 1 & \vec{0}\trans \\ \vec 0 & T\end{array}\right)\,\,|\,\, T\in\SO{d-1}\right\}.
\end{align}
The fact that this group is isomorphic to $\SO{d-1}$ is precisely due to the fact that our set of inputs is the homogeneous space $\mathcal{X}=\SO{d}/\SO{d-1}$, i.e.\ the $(d-1)$-sphere.
 For $d\geq 3$, we have $\int_{{\rm SO}(d-1)} T\, {\mathrm d}T=0$, and thus
\begin{align}
   \int_{\mathcal{G}_{\vec e_1}}S\, {\mathrm d}S=Q:=\left(\begin{array}{cc} 1 & \vec{0}\trans \\ \vec 0 & \mathbf{0}\end{array}\right).
\end{align}
Since $R_{\vec x}S \vec e_1=\vec x$ for every $S\in\mathcal{G}_{\vec e_1}$, every rotation matrix $R_{\vec x}S$ can be substituted for $R_x$ into \cref{def:TransFund}. 
Thus $\cPr{a}{\vec{x}} = c_0^{a}+\sum_{i,j=1}^d (R_{\vec x}S)_{i,j}c^{a}_{i,j}$.
By taking the average over all $S\in\mathcal{G}_{\vec e_1}$ according to the Haar measure, we get
\begin{equation}
P(a|\vec x)= c_0^{a}+\sum_{i,j=1}^d \left(R_{\vec x}\int_{\mathcal{G}_{\vec e_1}}S\,{\mathrm d}S\right)_{i,j}c^{a}_{i,j}.
\end{equation}
But $R_{\vec x}Q=(\vec x,\vec 0,\ldots,\vec 0)$, i.e.\ a matrix with first column equal to $\vec x$ and all further columns equal to zero. This proves that $P(a|\vec x)$ is affine-linear in $\vec x$ as claimed, in the case $d\geq 3$.

Now consider the case $d=2$. Here, $\vec x=(x_1,x_2)\trans$, and there is a unique choice of $R_{\vec x}$, namely $R_{\vec x}=\left(\begin{array}{cc} x_1 & -x_2 \\ x_2 & x_1\end{array}\right)$. Then, $P(a|\vec x)$ being affine-linear in $R_{\vec x}$ is equivalent to being affine-linear in $\vec{x}$.

From normalization, $\sum_a \cPr{a}{\vec{x}} = 1\; \forall \vec{x}\in\mathcal{X}$,
 which by transitivity of $\mathcal{G}$ on $\mathcal{X}$ can be re-written $\sum_a \cPr{a}{R\vec{x}_0} = 1\;\forall R\in\mathcal{G}$.
Suppose we take the Haar average of $\mathcal{G}$ over both sides of this constraint:
\begin{align}
\int_{\SO{d}} \hspace{-1.5em} dR\, \sum_a \cPr{a}{R\vec{x}_0} & 
 = \sum_a \left( c_0^{a}+\vec{c}^{\,a}\cdot \int_{\SO{d}}\hspace{-1.5em} dR\,R \vec x \right) \nonumber \\
& = \sum_a c_0^{a} =  1,
\end{align}
where we have used $\int_{\SO{d}} R dR = 0$.
Since $\sum_a \cPr{a}{R\vec{x}_0} = 1$ for each individual $R\in\mathcal{G}$,
 we have $\left(\sum_a \vec{c}^{\,a} \right)\cdot \left(R \vec{x}_0\right) = 0$.
Since by transitivity $\dim\!\left[\,{\rm span}\!\left(\{R\vec{x}_0\}_{R\in\mathcal{G}}\right)\right] = d$,
 it follows that $\sum_a \vec{c}^{\,a} = \vec{0}$.

For any $\vec{x}$, one may find some $R\in\SO{d}$ such that $R\vec{x} = -\vec{x}$ (since both $\vec{x}$ and $-\vec{x} \in \sphere{d-1}$ and $\SO{d}$ is transitive on $\sphere{d-1}$).
Hence, for the black box $\cPr{a}{\vec{x}}$, there is always another black box $\cPr{a}{-\vec{x}}$ such that the average statistics of these two boxes is given by $\frac{1}{2}\left[\cPr{a}{\vec{x}}+\cPr{a}{-\vec{x}}\right] = c_0^a$.
Clearly, then $c_0^a \geq 0$.
Finally, from the definition of the dot product, $\min_{\vec{x}\in\sphere{d-1}} \left( \vec{c}\cdot\vec{x} \right) = -|\vec{c}\,|$. 
Thus, if $|\vec{c}^{\,a}| > c_0^a$, there will be some $\vec{x}$ such that $\cPr{a}{\vec{x}} = c_0^a - |\vec{c}^{\,a}| < 0$, which is not a valid probability.  
Similarly $\max_{\vec{x}\in\sphere{d-1}} \left( \vec{c}\cdot\vec{x} \right) = |\vec{c}\,|$, so if $|\vec{c}^{\,a}| > 1 - c_0^a$, there will be some $\vec{x}$ such that $\cPr{a}{\vec{x}} = c_0^a + |\vec{c}^{\,a}|\geq 1$.
Hence $|\vec{c}^{\,a}| \leq \min\left(c_0^a,1-c_0^a\right)$.

The converse follows from the transitivity of $\SO{d}$ on $\sphere{d-1}$: any $\vec{x}$ can be expressed as $R\vec{x}_0$ for some fixed $\vec{x}_0$ and $R\in\SO{d}$. Thus \cref{eq:GenericBloch} can be written $\cPr{a}{\vec{x}} = c_0^a + \vec{c}^{\,a}\cdot R\vec{x}_0$ which has the form of \cref{eq:TransFund}.
\end{proof}
\end{lemma}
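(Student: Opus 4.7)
The plan is to exploit the non-uniqueness of the rotation $R_{\vec x}$ that sends a fixed default direction $\vec x_0$ to $\vec x$. I would pick $\vec x_0 = \vec e_1 = (1,0,\ldots,0)^{\mathrm T}$, so that the stabilizer $\mathcal{H}=\mathcal{G}_{\vec e_1}\subset\SO{d}$ is exactly the block-embedded copy of $\SO{d-1}$ fixing the first coordinate. Since $\cPr{a}{\vec x}$ must equal the right-hand side of~\eqref{eq:TransFund} for \emph{every} admissible $R_{\vec x}$, it in particular equals its average over the replacements $R_{\vec x}\mapsto R_{\vec x}S$ with $S\in\mathcal{H}$. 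Thus the matrix $R_{\vec x}$ in~\eqref{eq:TransFund} can be replaced by $R_{\vec x}\int_{\mathcal{H}} S\,\mathrm dS$, where the Haar integral is taken with respect to $\mathcal{H}$.

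For $d\geq 3$, the key computation is that the Haar integral of the fundamental representation of $\SO{d-1}$ vanishes (as it is nontrivially irreducible), so $\int_{\mathcal{H}} S\,\mathrm dS$ equals the rank-one matrix $Q$ with a single $1$ in the top-left entry and zeros elsewhere. Consequently $R_{\vec x}Q$ has first column $\vec x$ and all other columns zero, killing every term $c^a_{i,j}$ with $j\geq 2$ in~\eqref{eq:TransFund} and leaving $\cPr{a}{\vec x} = c_0^a + \sum_i x_i\, c^a_{i,1} \equiv c_0^a + \vec c^{\,a}\cdot\vec x$. The $d=2$ case is handled separately: there the stabilizer is trivial and $R_{\vec x}=\bigl(\begin{smallmatrix}x_1 & -x_2\\ x_2 & x_1\end{smallmatrix}\bigr)$ depends affine-linearly on $\vec x$ anyway, so affine-linearity in matrix entries is affine-linearity in $\vec x$. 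The converse direction is immediate by transitivity: writing $\vec x = R\vec x_0$ and expanding $\vec c^{\,a}\cdot R\vec x_0 = \sum_{i,j} c_i^a R_{ij}(\vec x_0)_j$ realizes the affine form as a fundamental-plus-trivial representation.

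The remaining constraints are purely probability-theoretic. Normalization $\sum_a \cPr{a}{\vec x} = 1$ for all $\vec x$ gives an affine-linear equation in $\vec x$ that must hold identically, so the constant part yields $\sum_a c_0^a = 1$ and the linear part yields $\sum_a \vec c^{\,a} = \vec 0$ (since $\{\vec x\}$ spans $\reals^d$ by transitivity). For the non-negativity bound on $c_0^a$, I would use that for each $\vec x$ there exists $R\in\SO{d}$ with $R\vec x = -\vec x$ (any rotation flipping a 2-plane containing $\vec x$ works), so averaging the two non-negative probabilities $\cPr{a}{\vec x}$ and $\cPr{a}{-\vec x}$ yields exactly $c_0^a\geq 0$. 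The norm bound $|\vec c^{\,a}|\leq\min(c_0^a,1-c_0^a)$ then follows by choosing $\vec x = \pm \vec c^{\,a}/|\vec c^{\,a}|$ to saturate the Cauchy--Schwarz extrema and demanding $0\leq \cPr{a}{\vec x}\leq 1$.

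I don't expect any serious obstacle. The only subtle ingredient is the Haar-integral argument over the stabilizer, and it relies exclusively on the standard fact that the fundamental representation of $\SO{d-1}$ has no invariant vectors for $d\geq 3$; the $d=2$ boundary case is settled by a direct identification. Everything else is straightforward bookkeeping of the normalization and positivity constraints on probabilities.
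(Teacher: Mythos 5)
Your proposal is correct and follows essentially the same route as the paper: averaging over the stabilizer $\SO{d-1}$ via the Haar measure to reduce the fundamental representation to its first column, treating $d=2$ separately by the explicit form of $R_{\vec x}$, and deriving the normalization and positivity constraints by the same spanning and antipodal-averaging arguments. No gaps.
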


We can formally define the concept of an ``unbiased'' black box where if the input orientation is randomized, no particular outcome is preferred:
\begin{definition}[Unbiased]
\label{def:Unbiased}
Consider a $\mathcal{G}$-box $\cPr{a}{x}$ for some compact group $\mathcal{G}$. 
We say that this box is {\em unbiased}  if the Haar average of $\cPr{a}{Rx}$ over $R\in\mathcal{G}$ is the same for all $a\in\mathcal{A}$.
\end{definition}
It follows from normalization that if a black box transforms fundamentally under rotations and is unbiased, it may be written in the form $\cPr{a}{\vec{x}} = \frac{1}{|\mathcal{A}|} + \vec{c}^{\,a}\cdot\vec{x}$.

We extend both these definitions to the local parts of a bipartite system by considering the {\em conditional boxes}
\begin{align}
P_A^{b,\vec{y}}(a|\vec{x})&:=\cPr{a,b}{\vec{x},\vec{y}}/\cPrs{B}{b}{\vec{y}}\quad (b,\vec{y}\mbox{ fixed}), \label{Eq:CondBoxA}\\
P_B^{a,\vec{x}}(b|\vec{y})&:=\cPr{a,b}{\vec{x},\vec{y}}/\cPrs{A}{a}{\vec{x}}\quad (a,\vec{x}\mbox{ fixed}),
\end{align}
defined whenever $\cPrs{B}{b}{\vec{y}}>0$ and $\cPrs{A}{a}{\vec{x}}>0$ respectively.

A {\em conditional box} can be thought of as the black box Alice has if she is told Bob's measurement and outcome. 
(This is in contrast to a {\em marginal} black box, which quantifies Alice's statistics when she knows nothing of Bob's measurement or outcome.)
No-signalling implies the existence of well-defined marginal boxes $\cPrs{B}{b}{\vec{y}}$ and $\cPrs{A}{a}{\vec{x}}$.

\begin{definition}
We say that a no-signalling bipartite box $\cPr{a,b}{\vec{x},\vec{y}}$ \emph{transforms fundamentally locally} (is \emph{locally unbiased}) if all conditional boxes transform fundamentally (are unbiased).
\end{definition}

The next two lemmas show that these properties are preserved by convex combinations of boxes.
\begin{lemma}
\label{lem:ConvexTLF}
Let $\{\cPrs{i}{a,b}{\vec{x},\vec{y}}\}_{i=1,\ldots N}$ be a set of no-signalling bipartite black boxes that transform fundamentally locally.
Any convex combination $\cPr{a,b}{\vec{x},\vec{y}} := \sum_i \lambda_i \cPrs{i}{a,b}{\vec{x},\vec{y}}$ where all $\lambda_i \geq 0$ and $\sum_i\lambda_i=1$ also transforms fundamentally locally.
\begin{proof}
First, we calculate the marginal distribution:
$\cPrs{A}{a}{\vec{x}} \ :=\  \sum_b \cPr{a,b}{\vec{x},\vec{y}} = \sum_b \sum_i \lambda_i \cPrs{i}{a,b}{\vec{x},\vec{y}}  = \sum_i \lambda_i \sum_b \cPrs{i}{a,b}{\vec{x},\vec{y}} = \sum_i \lambda_i \cPrs{A,i}{a}{\vec{x}}$.
Similarly, $\cPrs{B}{b}{\vec{y}} = \sum_i \lambda_i \cPrs{B,i}{b}{\vec{y}}$.
First, we note that $\cPrs{B}{b}{\vec{y}}=0$ only if $\cPrs{B,i}{b}{\vec{y}}=0$ for all $i$. In this case, the combined conditional box is undefined, and there is nothing to prove.
Thus, we may proceed with the case that $\cPrs{B}{b}{\vec{y}}>0$.
 
With $\cPr{a,b}{\vec{x},\vec{y}} = \sum_i \lambda_i \cPrs{i}{a,b}{\vec{x},\vec{y}} = {\sum_{i}}' \lambda_i \cPrs{B,i}{b}{\vec{y}} {\rm P}^{b,\vec{y}}_{A,i}\!\left(a|\vec{x}\right)$ we obtain
\begin{align}
{\rm P}^{b,\vec{y}}_{A}\!\left(a|\vec{x}\right) & =  {\sum_{i}}'  \dfrac{\lambda_i \cPrs{B,i}{b}{\vec{y}}}{\cPrs{B}{b}{\vec{y}}} {\rm P}^{b,\vec{y}}_{A,i}\!\left(a|\vec{x}\right) \label{Eq:CondBoxConvMix}
\end{align}
Here, ${\sum_{i}}'$ denotes a sum over all those $i$ for which $P_{B,i}(b|\vec y)>0$. These are exactly the $i$ for which ${\rm P}^{b,\vec{y}}_{A,i}\!\left(a|\vec{x}\right)$ is well-defined.
Meanwhile, ${\sum_{i}}' \frac{\lambda_i \cPrs{B,i}{b}{\vec{y}}}{\cPrs{B}{b}{\vec{y}}} = 1$, and hence we may define $\mu_i := \frac{\lambda_i \cPrs{B,i}{b}{\vec{y}}}{\cPrs{B}{b}{\vec{y}}}$ for those $i$ with $P_{B,i}(b|\vec y)>0$, and $\mu_i=0$ for all other $i$, such that $\mu_i\!\geq\!0$ and moreover {\mbox{$\sum_i \mu_i={\sum_{i}}'\mu_i=1$}}. 
Thus, the new conditional box is itself a convex combination of the constituent conditional boxes.
A similar convex combination can be found for ${\rm P}^{a,\vec{x}}_{B}\!\left(b|\vec{y}\right)$.

Since the constituent conditional boxes transform fundamentally, 
 from Lemma~\ref{lem:Bloch} we write ${\rm P}^{b,\vec{y}}_{A,i}\!\left(a|\vec{x}\right) =c_{i,0}^{(b,\vec{y}),a}+\vec c_i^{\,(b,\vec{y}),a}\cdot \vec x$. 
Then, 
${\rm P}^{b,\vec{y}}_{A}\!\left(a|\vec{x}\right)
  = \sum_i \mu_i \left( c_{i,0}^{(b,\vec{y}),a}+\vec{c}_i^{\,(b,\vec{y}),a} \cdot \vec{x} \right)
= c_0^{(b,\vec{y}),a} + \vec{c}^{\,(b,\vec{y}),a} \cdot \vec{x}$
where 
$c_0^{(b,\vec{y}),a} := \sum \mu_i c_{i,0}^{(b,\vec{y}),a}$ and $\vec{c}^{\,(b,\vec{y}),a} := \sum \mu_i \vec{c}_i^{\,(b,\vec{y}),a}$.
By the converse part of Lemma~\ref{lem:Bloch}, ${\rm P}^{b,\vec{y}}_{A}\!\left(a|\vec{x}\right)$ is thus a valid black box that transforms fundamentally.
The same argument holds for Bob's conditional boxes. 
\end{proof}
\end{lemma}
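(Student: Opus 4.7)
The plan is to follow the structure of Lemma III (where the analogous argument just dealt with unbiasedness): reduce the problem to a statement about affine-linear functions, and then invoke Lemma II (``Bloch'' form) to move back and forth between the property ``transforms fundamentally'' and the property ``is affine-linear in the spatial-direction input.'' The strategy is thus (a) rewrite each conditional box of the mixture as a convex combination of conditional boxes of the constituents, (b) use that affine-linearity in $\vec x$ (or $\vec y$) is preserved under convex combinations, and (c) invoke the converse of Lemma II to upgrade this back to transforming fundamentally.

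First, I would set $\mathrm{P}(a,b|\vec x,\vec y):=\sum_i\lambda_i\mathrm{P}_i(a,b|\vec x,\vec y)$ and compute the marginals by summing over the unobserved outcome. Because each constituent is no-signalling, each marginal $\mathrm{P}_{A,i}(a|\vec x)$ and $\mathrm{P}_{B,i}(b|\vec y)$ is well-defined, and linearity gives $\mathrm{P}_A(a|\vec x)=\sum_i\lambda_i\mathrm{P}_{A,i}(a|\vec x)$ and similarly for $\mathrm{P}_B$; the mixture is therefore itself no-signalling, so its marginals and conditional boxes are also well-defined wherever the relevant marginal is nonzero. I then focus on a fixed $(b,\vec y)$ with $\mathrm{P}_B(b|\vec y)>0$ and consider Alice's conditional box $\mathrm{P}_A^{b,\vec y}(a|\vec x)$; the case $\mathrm{P}_B(b|\vec y)=0$ is vacuous (conditional boxes are undefined), and the argument for $\mathrm{P}_B^{a,\vec x}(b|\vec y)$ is symmetric.

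Next, I write
\begin{equation}
\mathrm{P}(a,b|\vec x,\vec y)=\sum_i{}'\lambda_i\mathrm{P}_{B,i}(b|\vec y)\,\mathrm{P}_{A,i}^{b,\vec y}(a|\vec x),
\end{equation}
where the primed sum runs over those $i$ with $\mathrm{P}_{B,i}(b|\vec y)>0$ (for the others the factorization into conditional boxes is undefined, but the contribution to $\mathrm{P}(a,b|\vec x,\vec y)$ is zero anyway). Dividing through by $\mathrm{P}_B(b|\vec y)$ gives
\begin{equation}
\mathrm{P}_A^{b,\vec y}(a|\vec x)=\sum_i{}'\mu_i\,\mathrm{P}_{A,i}^{b,\vec y}(a|\vec x),\quad \mu_i:=\frac{\lambda_i\mathrm{P}_{B,i}(b|\vec y)}{\mathrm{P}_B(b|\vec y)},
\end{equation}
with $\mu_i\geq 0$ and $\sum'_i\mu_i=1$, i.e.\ the mixture's conditional box is itself a convex combination of the constituents' conditional boxes at the same $(b,\vec y)$.

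Each constituent transforms fundamentally locally, so by the forward direction of Lemma II each $\mathrm{P}_{A,i}^{b,\vec y}(a|\vec x)=c_{0,i}^{(b,\vec y),a}+\vec c_i^{\,(b,\vec y),a}\cdot\vec x$. Linearity of the sum gives
\begin{equation}
\mathrm{P}_A^{b,\vec y}(a|\vec x)=c_0^{(b,\vec y),a}+\vec c^{\,(b,\vec y),a}\cdot\vec x,
\end{equation}
with $c_0^{(b,\vec y),a}:=\sum'_i\mu_ic_{0,i}^{(b,\vec y),a}$ and $\vec c^{\,(b,\vec y),a}:=\sum'_i\mu_i\vec c_i^{\,(b,\vec y),a}$; moreover since it is a probability by construction, the normalization and non-negativity conditions of Lemma II hold automatically. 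The converse direction of Lemma II then converts this affine-linear form back into the statement that $\mathrm{P}_A^{b,\vec y}(a|\vec x)$ transforms fundamentally under $\SO{d}$. The identical argument applied to $\mathrm{P}_B^{a,\vec x}(b|\vec y)$ finishes the proof. The only real subtlety — the main obstacle in the sense that it requires care rather than cleverness — is handling indices $i$ for which $\mathrm{P}_{B,i}(b|\vec y)$ vanishes, which I resolve by restricting to the primed sum and observing that such terms contribute nothing to the mixed joint distribution.
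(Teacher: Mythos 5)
Your proof is correct and follows essentially the same route as the paper's: compute the marginals of the mixture, restrict to the indices with nonvanishing $\cPrs{B,i}{b}{\vec{y}}$ to express the mixed conditional box as a convex combination with weights $\mu_i = \lambda_i\cPrs{B,i}{b}{\vec{y}}/\cPrs{B}{b}{\vec{y}}$, and then pass through the affine-linear (Bloch) characterization in both directions. The only cosmetic difference is that you explicitly remark that the mixture is no-signalling, which the paper leaves implicit.
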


\begin{lemma}
\label{lem:ConvexLU}
Let $\{\cPrs{i}{a,b}{\vec{x},\vec{y}}\}_{i=1,\ldots N}$ be a set of non-signalling black boxes that are locally unbiased with respect to $\mathcal{G}$. 
Any convex combination $\cPr{a,b}{\vec{x},\vec{y}} := \sum_i \lambda_i \cPrs{i}{a,b}{\vec{x},\vec{y}}$ where all $\lambda_i \geq 0$ and $\sum_i\lambda_i=1$ is also locally unbiased with respect to $\mathcal{G}$.
\begin{proof}
In the proof of Lemma~\ref{lem:ConvexTLF} we have seen that there exists a probability distribution $\{ \mu_i \}_i $ such that ${\rm P}^{b,\vec{y}}_{A}\!\left(a|\vec{x}\right)  = \sum_i \mu_i {\rm P}^{b,\vec{y}}_{A,i}\!\left(a|\vec{x}\right)$. Thus we find 
\begin{align}
	\int{ \rm P}^{b,\vec{y}}_{A}\!\left(a|R \vec{x}\right) \mathrm d R & =  \sum_i \mu_i \int {\rm P}^{b,\vec{y}}_{A,i}\!\left(a|R \vec{x}\right) \mathrm d R \nonumber \\
	& = \frac{\sum_i \mu_i}{|\mathcal A|} = \frac{1}{|\mathcal A|}.
\end{align}
Likewise holds for Bob's conditional boxes, and hence, $\cPr{a,b}{\vec{x},\vec{y}}$ is also locally unbiased.
\end{proof}
\end{lemma}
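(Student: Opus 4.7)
The plan is to reduce this lemma to the decomposition already constructed in the proof of Lemma~\ref{lem:ConvexTLF}, and then exploit linearity of the Haar integral. The key observation is that unbiasedness is a linear condition on the conditional boxes, and Lemma~\ref{lem:ConvexTLF} already exhibits each conditional box of the convex combination as itself a convex combination of constituent conditional boxes. Once we have that, the result follows almost immediately.

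Concretely, I would first recall from the proof of Lemma~\ref{lem:ConvexTLF} that, for each $(b,\vec{y})$ with $\cPrs{B}{b}{\vec{y}}>0$, there exists a probability distribution $\{\mu_i\}$ (depending on $b,\vec{y}$) such that
\begin{equation*}
P_A^{b,\vec{y}}(a|\vec{x}) \;=\; \sum_i \mu_i\, P_{A,i}^{b,\vec{y}}(a|\vec{x}),
\end{equation*}
with $\sum_i \mu_i = 1$; whenever $\cPrs{B}{b}{\vec{y}}=0$, the conditional box is undefined and there is nothing to show. An analogous decomposition of $P_B^{a,\vec{x}}(b|\vec{y})$ holds in terms of the constituent conditionals on Bob's side.

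Next I would substitute $R\vec{x}$ for $\vec{x}$ (with $R\in\mathcal{G}$) and average over the Haar measure. Because the coefficients $\mu_i$ do not depend on $\vec{x}$ (only on $b$ and $\vec{y}$, which are fixed in the integrand), the integral and the finite sum commute:
\begin{equation*}
\int_{\mathcal{G}} P_A^{b,\vec{y}}(a|R\vec{x})\, \mathrm{d}R \;=\; \sum_i \mu_i \int_{\mathcal{G}} P_{A,i}^{b,\vec{y}}(a|R\vec{x})\, \mathrm{d}R.
\end{equation*}
By hypothesis each constituent box is locally unbiased, so every term on the right equals $1/|\mathcal{A}|$ independently of $a$. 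Since $\sum_i \mu_i = 1$, the left-hand side also equals $1/|\mathcal{A}|$ independently of $a$, proving unbiasedness of Alice's conditional box. The same argument, with Alice and Bob interchanged, handles Bob's conditional boxes.

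The only real subtlety, and certainly not an obstacle, is the careful handling of the denominators in the definition of conditional boxes: if a marginal probability vanishes, the corresponding conditional box is undefined, and one must check that the decomposition and the resulting Haar average remain meaningful on the set of $(a,\vec{x})$ or $(b,\vec{y})$ where they are well defined. This is resolved exactly as in the proof of Lemma~\ref{lem:ConvexTLF}, by restricting the summation to those indices $i$ with non-vanishing partial marginals.
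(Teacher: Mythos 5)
Your proposal is correct and follows essentially the same route as the paper: invoke the convex decomposition ${\rm P}^{b,\vec{y}}_{A} = \sum_i \mu_i {\rm P}^{b,\vec{y}}_{A,i}$ established in the proof of Lemma~\ref{lem:ConvexTLF}, commute the Haar integral with the finite sum, and use unbiasedness of each constituent together with $\sum_i\mu_i=1$. Your extra remark on vanishing marginals matches how the paper handles that case in Lemma~\ref{lem:ConvexTLF}, so nothing is missing.
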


\begin{lemma}
\label{lem:LFLU}
Consider a bipartite black box with inputs $\mathcal{X}=\mathcal{Y}=\sphere{d-1}$ and binary outcomes $\mathcal{A}=\mathcal{B}=\{+1,-1\}$.
If this box transforms fundamentally locally and is locally unbiased, then it describes quantum correlations.
\begin{proof}
From Lemma~\ref{lem:Bloch}, binary-outcome conditional boxes that transform fundamentally and are unbiased can be written:
\begin{align}
{\rm P}_A^{b,\vec{y}}(a|\vec{x}) & := \left(\begin{array}{c} \frac{1}{2} \\ a \vec{c}^{\,(b,\vec{y})} \end{array}\right) \cdot \left(\begin{array}{c} 1 \\ \vec{x} \end{array}\right)
=
\left(\begin{array}{c} \frac{1}{2} \\ \vec{c}^{\,(b,\vec{y})} \end{array}\right) \cdot \left(\begin{array}{c} 1 \\ a \vec{x} \end{array}\right)
, \\
{\rm P}_B^{a,\vec{x}}(b|\vec{y}) & := \left(\begin{array}{c} \frac{1}{2} \\ b \vec{c}^{\,(a,\vec{x})} \end{array}\right) \cdot \left(\begin{array}{c} 1 \\ \vec{y} \end{array}\right) = \left(\begin{array}{c} \frac{1}{2} \\ \vec{c}^{\,(a,\vec{x})} \end{array}\right) \cdot \left(\begin{array}{c} 1 \\ b \vec{y} \end{array}\right),
\end{align}
such that the joint probability distribution is given by:
\begin{align}
\label{eq:JointProbBilinearA}
\cPr{a,b}{\vec{x},\vec{y}} & = {\rm P}_B\!\left(b\middle|\vec{y}\right)\left(\begin{array}{c} \frac{1}{2}\\ \vec{c}^{\,(b,\vec{y})} \end{array}\right) \cdot \left(\begin{array}{c} 1 \\ a\vec{x} \end{array}\right), \\
\label{eq:JointProbBilinearB}
\cPr{a,b}{\vec{x},\vec{y}} & = {\rm P}_A\!\left(a\middle|\vec{x}\right)\left(\begin{array}{c} \frac{1}{2}\\ \vec{c}^{\,(a,\vec{x})} \end{array}\right) \cdot \left(\begin{array}{c} 1 \\ b\vec{y} \end{array}\right).
\end{align}

This defines a map $\tilde{\omega}_{AB}$ acting on $\vec{e}_{a,\vec{x}} := \left(1, a\vec{x}\right)\trans$ and $\vec{e}_{b,\vec{y}}:= \left(1,b\vec{y}\right)\trans$,
 such that $\tilde{\omega}_{AB}(\vec{e}_{a,\vec{x}},\vec{e}_{b,\vec{y}}) =\cPr{a,b}{\vec{x},\vec{y}}$.
Moreover, ${\rm span}\!\left(\vec{e}_{a,\vec{x}}\right) ={\rm span}\!\left(\vec{e}_{b,\vec{y}}\right) = \reals^{d+1}$,
 and so this function has a unique bilinear extension $\omega_{AB}: \reals^{d+1}\times\reals^{d+1} \to \reals$.
 
The set of non-negative linear combinations of $\vec{e}_{a,\vec{x}}$ define a {\em positive Euclidean cone} $A^+ \subset \reals^{d+1}$ , whose extremal rays are the non-negative multiples of $\left(1,\vec{z}\right)\trans$ for $\vec{z}\in\sphere{d-1}$.
We may then define an {\em Archimedean order unit} (AOU) \cite{KleinmannOSW13}, $\vec{u}:=(2,0,\ldots,0)\trans \in \reals^{d+1}$
 and define an AOU-space $(\reals^{d+1},A^+,\vec{u})$.
An identical AOU-space $(\reals^{d+1},B^+,\vec{u})$ can be defined using the non-negative linear combinations of $\vec{e}_{b,\vec{y}}$.
 
Now, we shall employ a result from \citet{KleinmannOSW13} (generalizing a result by \citet{BarnumBBEW10}) that pertains to bilinear maps on positive Euclidean cones.
If a bilinear map $\omega_{AB}$ on such cones is both {\em unital} and {\em positive}, then there exists a bipartite quantum system $\rho_{AB}$ and a map from each point $\vec{a}\in A^+$, $\vec{b}\in B^+$ onto local quantum POVM elements $M_a$, $M_b$ such that $\omega_{AB}( \vec{a} , \vec{b}) = {\rm tr} \left(\rho_{AB} M_a\otimes M_b\right)$.
 
We show that $\omega_{AB}$ satisfies these conditions.
First, for any given $a$, $\vec{x}$ (likewise $b$, $\vec{y}$), it can be seen that
\begin{equation}
\vec{e}_{+a,\vec{x}}+\vec{e}_{-a,\vec{x}} = \vec{u} = \vec{e}_{+b,\vec{y}}+\vec{e}_{-b,\vec{y}},
\end{equation}
and hence 
$\omega_{AB}\left(\vec{u},\vec{u}\right)  = \sum_{a,b} \cPr{a,b}{\vec{x},\vec{y}} = 1$,
 which means that $\omega_{AB}$ is {\em unital}.
Next, since 
 every $\vec{p}\in A^+$ can be written as a non-negative linear combination of finitely many $e_{a,\vec{x}}$ (likewise for $\vec{q}\in B^+$), then $\omega_{AB}(\vec{p},\vec{q})\geq 0$ for all $\vec{p},\vec{q}\in A^+,B^+$, showing that $\omega_{AB}$ is \emph{positive}.
Hence, $\omega_{AB}$ can be realised by a quantum system, and $\cPr{a,b}{\vec{x},\vec{y}}$ is a quantum behaviour.
\end{proof}
\end{lemma}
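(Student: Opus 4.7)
The plan is to prove Lemma~\ref{lem:LFLU} directly, by first rewriting all conditional boxes in an explicit affine form, then realizing the bipartite probabilities as the values of a bilinear functional on positive Euclidean cones, and finally invoking the representation theorem of Kleinmann et al.~\cite{KleinmannOSW13} (building on Barnum et al.~\cite{BarnumBBEW10}) to extract a quantum model.

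First, I would apply Lemma~\ref{lem:Bloch} to each conditional box. Because the box transforms fundamentally locally, each ${\rm P}_A^{b,\vec{y}}(a|\vec{x})$ is affine-linear in $\vec{x}$; because it is locally unbiased and binary-outcome, the constant term must equal $1/2$ and the two outcomes must have opposite Bloch vectors. Hence I can write ${\rm P}_A^{b,\vec{y}}(a|\vec{x}) = \tfrac12 + a\,\vec{c}^{\,(b,\vec{y})}\!\cdot\!\vec{x}$, and analogously ${\rm P}_B^{a,\vec{x}}(b|\vec{y}) = \tfrac12 + b\,\vec{c}^{\,(a,\vec{x})}\!\cdot\!\vec{y}$. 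Using the chain rule $\cPr{a,b}{\vec{x},\vec{y}} = \cPrs{B}{b}{\vec{y}}\,{\rm P}_A^{b,\vec{y}}(a|\vec{x})$ (undefined terms being zero anyway), the joint probability becomes the inner product of $\vec{e}_{a,\vec{x}} := (1, a\vec{x})\trans$ with a vector linear in $(b,\vec{y})$; the symmetric expression via conditioning on Alice shows it is equally well linear in $\vec{e}_{b,\vec{y}} := (1, b\vec{y})\trans$. This defines a map $\tilde{\omega}_{AB}(\vec{e}_{a,\vec{x}},\vec{e}_{b,\vec{y}}) = \cPr{a,b}{\vec{x},\vec{y}}$ that is separately linear on each factor, and since $\{\vec{e}_{a,\vec{x}}\}$ and $\{\vec{e}_{b,\vec{y}}\}$ span $\reals^{d+1}$, it extends uniquely to a bilinear map $\omega_{AB}:\reals^{d+1}\!\times\!\reals^{d+1}\to\reals$.

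Next, I would set up the Archimedean-order-unit framework needed to invoke~\cite{KleinmannOSW13}. Let $A^+\subset\reals^{d+1}$ be the positive Euclidean cone generated by all $\vec{e}_{a,\vec{x}}$ (the cone over the affine-lifted unit sphere) and likewise define $B^+$. Take $\vec{u} := (2, 0, \ldots, 0)\trans$ as an order unit, noting that $\vec{e}_{+1,\vec{x}} + \vec{e}_{-1,\vec{x}} = \vec{u}$ for every $\vec{x}$, and similarly on Bob's side. Unitality of $\omega_{AB}$ then follows from normalization, $\omega_{AB}(\vec{u},\vec{u}) = \sum_{a,b}\cPr{a,b}{\vec{x},\vec{y}} = 1$, and positivity follows because any $\vec{p}\in A^+$ (resp.~$\vec{q}\in B^+$) is a non-negative combination of finitely many $\vec{e}_{a,\vec{x}}$ (resp.~$\vec{e}_{b,\vec{y}}$), so $\omega_{AB}(\vec{p},\vec{q})$ is a non-negative combination of probabilities. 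The representation theorem then yields a bipartite quantum state $\rho_{AB}$ together with local POVM elements $M_a, M_b$ such that $\omega_{AB}(\vec{e}_{a,\vec{x}},\vec{e}_{b,\vec{y}}) = \tr(\rho_{AB}\,M_a\otimes M_b)$, which is exactly the claim that $\cPr{a,b}{\vec{x},\vec{y}}$ is a quantum correlation.

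The step I anticipate as the main obstacle is verifying that $\tilde{\omega}_{AB}$ really does admit a \emph{bilinear} extension to all of $\reals^{d+1}\!\times\!\reals^{d+1}$, rather than merely being a consistent function on the prescribed pairs. Because $\vec{e}_{a,\vec{x}}$ and $\vec{e}_{-a,\vec{x}}$ are two distinct elements whose sum is $\vec{u}$ (and similar relations abound on each side), there are non-trivial linear relations among the generators, and the extension is only well-defined if the assigned values respect these. I would address this by selecting a basis of $\reals^{d+1}$ from among the $\vec{e}_{a,\vec{x}}$ on each side, defining $\omega_{AB}$ by bilinear extension from basis pairs, and then checking that the resulting formula reproduces $\cPr{a,b}{\vec{x},\vec{y}}$ for \emph{all} $(a,\vec{x},b,\vec{y})$---a consistency check whose validity relies precisely on the joint affine-linearity (in each of $\vec{x}$ and $\vec{y}$ separately) that we extracted from fundamental local transformation together with the chain rule and no-signalling.
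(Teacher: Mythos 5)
Your proposal is correct and follows essentially the same route as the paper's proof: the affine form from Lemma~\ref{lem:Bloch}, the bilinear map on $\vec{e}_{a,\vec{x}}$ and $\vec{e}_{b,\vec{y}}$, the cone and order unit $\vec{u}=(2,0,\ldots,0)\trans$, and the appeal to the representation theorem of Kleinmann et al.\ after checking unitality and positivity. Your added care about the well-definedness of the bilinear extension in the presence of linear relations among the generators is a legitimate point that the paper passes over with a one-line assertion, and your basis-selection consistency check is the right way to close it.
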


The premise of local unbiasedness cannot be removed: if we only demand that a box transforms fundamentally locally, then it can generate correlations that are disallowed by quantum theory. 
To see this, let $P_0$ be any non-signalling $(2,2,2)$-behaviour (for example, a PR-box), and define
\begin{align}
\cPr{a,b}{\vec{x},\vec{y}} &:= \lambda_A\lambda_B P_0(a,b|0,0)+\lambda_A\bar\lambda_B P_0(a,b|0,1) \nonumber \\
&\quad +\bar{\lambda}_A \lambda_B P_0(a,b|1,0) + \bar\lambda_A\bar\lambda_B P_0(a,b|1,1),
\end{align}
where $a,b\in\{-1,+1\}$, $\vec x,\vec y\in S^{d-1}$, $\lambda_A:=\frac 1 2(1+x_1)$, $\bar\lambda_A:=1-\lambda_A$, $\lambda_B:=\frac 1 2 (1+y_1)$, and $\bar\lambda_B:=1-\lambda_B$. 
If $d=3$, for example, this describes a situation in which two possible local inputs $x,y\in\{0,1\}$ are encoded into the first Bloch vector component of a qubit, the qubits are locally measured, and the measurement results $x,y\in\{0,1\}$ are input into the original box $P_0$.
This defines a valid non-signalling box, and the linear dependence of the outcome probabilities on $\vec x$ (resp.\ $\vec y$) demonstrate, via Lemma~\ref{lem:Bloch}, that $P$ transforms fundamentally locally. However, it reproduces $P_0$ via $P_0(a,b|r,t)=\cPr{a,b}{\vec{x}_r,\vec{y}_t}$, where $\vec x_0=\vec y_0=(1,0,\ldots,0)$ and $\vec x_1=\vec y_1=(-1,0,\ldots,0)$.

Now we show a converse statement, so that we get an exact classification of the quantum $(2,2,2)$-behaviours:
 i.e.\ the family of probabilities obtained in quantum theory during a two party Bell test, where each agent has two choices of input and obtains one of two outcomes.

\begin{lemma}
\label{lem:extremeQ}
Let $d\geq 2$. 
Then all extremal quantum (2,2,2)-behaviours can be realized by locally unbiased $\SO{d}$-boxes that transform fundamentally locally with $\mathcal{X}_A=\mathcal{X}_B=\sphere{d-1}$; the two settings (inputs) correspond to two choices of  directions.
\begin{proof}
It has been shown~\cite{Cirelson80,Masanes05,TonerV06,NavascuesGHA15} 
 that the extremal quantum $(2,2,2)$-behaviours can be realised by rank-$1$ projective measurements on two-qubit pure states.
Any extremal quantum $(2,2,2)$-behaviour $\cPr{a,b}{r,t}$ can then be written in the form $\cPr{a,b}{r,t} = \tr[\rho_{AB} (E^{(a)}_r \otimes F^{(b)}_t)]$ where $\rho$ is a pure state of two qubits and $E^{(a)}_r$ and $F^{(b)}_t$ are qubit rank-$1$ projectors, $a,b \in \{-1,+1\}$ and $r,t \in \{1,2\}$. 
We shall show that there exists a non-signalling $\SO{d}\times\SO{d}$-box $\cPr{a,b}{\vec{x}, \vec{y}}$ that transforms fundamentally locally, is locally unbiased, and has choices $\vec{x}_r, \vec{y}_t$ such that $\cPr{a,b}{\vec{x}_r, \vec{y}_t} = \cPr{a,b}{r,t}$.

Write $E := E_1^{(+1)}$ and $\tilde{E} := E_2^{(+1)}$. 
As $\tr{E} = 1$, its expansion in the Pauli operator basis is of the form $E = \frac{1}{2}\left(\id_2 + E_x\sigma_x + E_y\sigma_y + E_z \sigma_z\right)$, and the associated Bloch vector $\vec{x}'_1 := (E_x, E_y, E_z)\trans$ has Euclidean norm $1$. 
Let $\vec{x}'_2$ be the Bloch vector similarly associated with $\tilde{E}$. 
By changing the local bases unitarily, we can ensure that $\vec{x}'_1=(1,0,0)\trans$ and $\vec{x}'_2=(\cos\theta,\sin\theta,0)\trans$, where $0\leq\theta<2\pi$. 
Similarly, we can define the Bloch vectors $\vec y'_t$ for $t=1,2$ via the rank-$1$ projections $F_t^{(b)}$.

Define a linear map $\Pi:\reals^d\to\reals^3$ in the following way. 
If $d=2$, set $\Pi(v_1,v_2)\trans:=(v_1,v_2,0)\trans$; if $d\geq 3$, set $\Pi(v_1,\ldots,v_d)\trans:=(v_1,v_2,v_3)\trans$, which is an orthogonal projection (and the identity if $d=3$). 
Furthermore, for $\vec v\in\reals^3$, define $E_{\vec v}:=\frac 1 2(\id_2+\vec v \cdot\vec\sigma)$, which is positive-semidefinite whenever $|\vec v|\leq 1$. 
Consider the non-signalling $\SO{d}\times\SO{d}$-box
\begin{equation}
\cPr{a,b}{\vec{x},\vec{y}}:=\tr\left[\rho_{AB} E_{a\Pi\vec x}\otimes E_{b\Pi\vec y}\right].
\end{equation}
Since $\left|a\Pi\vec{x}\right|\leq \left|\vec{x}\right|=1$ for $\vec{x}\in S^{d-1}$ (and similarly for $b\Pi\vec y$), this defines a valid (quantum) behaviour. 
The conditional boxes are
\begin{equation}
   P_A^{b,\vec y}(a|\vec x)=\frac{\frac{1}{2} \tr\left[\rho_{AB}(\id+a\vec x \cdot(\Pi\trans \vec\sigma))\otimes E_{b\Pi\vec y} \right]}{\tr(\rho_B E_{b\Pi\vec y})}.
\end{equation}
This expression yields well-defined probabilities by construction, and it is affine-linear in $\vec x$. 
Analogous statements hold for the other conditional boxes. Thus, according to Lemma~\ref{lem:Bloch}, $\cPr{a,b}{\vec{x},\vec{y}}$ transforms fundamentally locally. Furthermore, averaging the above conditional box uniformly over $\vec x$ replaces $\vec x$ by zero and annihilates all dependence on $a$; hence this box is locally unbiased.

Let $\vec x_r\in\reals^d$ be the vector whose first two components are the first two components of $\vec x'_r$, and all other $(d-2)$ components are zero; define $\vec y_t$ analogously. 
Then $\cPr{a,b}{r,t}=\cPr{a,b}{\vec{x}_r,\vec{y}_t}$.
\end{proof}
\end{lemma}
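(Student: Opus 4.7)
The strategy is to leverage the known structural theorem that every extremal quantum $(2,2,2)$-behaviour can be realized by rank-one projective measurements on a two-qubit pure state, and then show that such a qubit realization can be isometrically embedded into an $\SO{d}\times\SO{d}$-box with the desired symmetry properties. Concretely, fix such an extremal behaviour $\cPr{a,b}{r,t} = \tr[\rho_{AB}(E_r^{(a)}\otimes F_t^{(b)})]$, with $E_r^{(a)}, F_t^{(b)}$ rank-one qubit projectors. Each such projector $E$ satisfies $\tr E=1$ and so has a Bloch representation $E = \frac{1}{2}(\id_2 + \vec v \cdot \vec\sigma)$ for a unit vector $\vec v \in \reals^3$.

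The first step is to put these Bloch vectors in a canonical position. By applying local unitaries to $\rho_{AB}$ (which does not change the behaviour), I can arrange that the two Bloch vectors $\vec x'_1, \vec x'_2$ of Alice's projectors both lie in the $xy$-plane, and similarly for Bob's $\vec y'_1, \vec y'_2$. The second step is to construct the extended box. Define a linear projection $\Pi:\reals^d\to\reals^3$: for $d\geq 3$, the orthogonal projection onto the first three coordinates; for $d=2$, the embedding $(v_1,v_2)\mapsto (v_1,v_2,0)$. For any $\vec w\in\reals^3$ with $|\vec w|\leq 1$, let $E_{\vec w}:=\tfrac{1}{2}(\id_2+\vec w\cdot\vec\sigma)$, which is a legitimate qubit effect. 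Since $|\Pi\vec x|\leq |\vec x|=1$ for $\vec x\in S^{d-1}$, the definition
\begin{equation}
\cPr{a,b}{\vec{x},\vec{y}} := \tr\!\bigl[\rho_{AB}\, E_{a\Pi\vec x}\otimes E_{b\Pi\vec y}\bigr]
\end{equation}
produces a valid non-signalling quantum behaviour for all $\vec x,\vec y\in S^{d-1}$.

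The third step is to verify the two symmetry properties. Local unbiasedness follows because each conditional box is affine-linear in $\vec{x}$ with the $a$-dependent part proportional to $\Pi\vec{x}$; averaging $\vec x$ uniformly over $S^{d-1}$ (Haar average of $\SO{d}$ applied to any reference direction) annihilates $\Pi\vec x$ by the same vanishing-integral argument used in Lemma~\ref{lem:Bloch}, leaving the outcome probability independent of $a$. That the box transforms fundamentally locally follows again via Lemma~\ref{lem:Bloch}: the conditional boxes
\begin{equation}
P_A^{b,\vec{y}}(a|\vec x) = \frac{\tfrac{1}{2}\tr\!\bigl[\rho_{AB}(\id + a\vec x\cdot(\Pi\trans\vec\sigma))\otimes E_{b\Pi\vec y}\bigr]}{\tr(\rho_B\, E_{b\Pi\vec y})}
\end{equation}
are manifestly affine-linear in $\vec x$ (and analogously in $\vec y$ for Bob's conditional boxes), so the converse direction of Lemma~\ref{lem:Bloch} applies.

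Finally, I need to exhibit input directions $\vec x_r,\vec y_t\in S^{d-1}$ that recover the target extremal behaviour. Since the canonical Bloch vectors $\vec x'_r,\vec y'_t$ lie in the $xy$-plane of $\reals^3$, I can take $\vec x_r \in S^{d-1}$ to be the vector whose first two coordinates equal those of $\vec x'_r$ and whose remaining $d-2$ coordinates vanish (likewise $\vec y_t$); then $\Pi\vec x_r = \vec x'_r$ and $\cPr{a,b}{\vec x_r,\vec y_t}=\cPr{a,b}{r,t}$ by construction. The main conceptual obstacle is handling the case $d=2$ uniformly with $d\geq 3$: only two Bloch-plane coordinates are available, which is why the canonical placement step uses local unitaries to compress both Bloch vectors into a single two-dimensional plane so that $\Pi$ can be defined consistently across all dimensions. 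Routine bookkeeping confirms positivity, normalisation, and the match with the target correlations.
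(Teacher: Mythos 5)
Your proposal is correct and follows essentially the same route as the paper's proof: the same reduction to rank-one projective measurements on a two-qubit pure state, the same canonical placement of the Bloch vectors in a common plane via local unitaries, the same projection/embedding map $\Pi$ and effect family $E_{a\Pi\vec x}$, the same verification of the two symmetry properties via affine-linearity and Haar averaging, and the same choice of $\vec x_r,\vec y_t$ to recover the target behaviour.
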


\vspace*{0.5em}
\noindent {\bf Proof of Theorem~\ref{thm:222quantum}.}
{\em Let $d\geq 2$.
The quantum $(2,2,2)$-behaviours are exactly those that can be realised by binary-outcome bipartite {$\SO{d}\!\times\!\SO{d}$-}boxes that transform fundamentally locally and are locally unbiased, restricted to two choices of input direction per party per box, and statistically mixed via shared randomness.
}
\begin{proof}
Lemma~\ref{lem:LFLU} tells us that ``$(2,\sphere{d-1},2)$--behaviours'' that transform fundamentally locally, and are locally unbiased, can be realised by local measurements on a bipartite quantum system.
If we restrict our choice of inputs from the full $\sphere{d-1}$ freedom to just two choices of orientation per party,
 then these will be $(2,2,2)$--behaviours, and since they can be realised by a quantum system, they are quantum $(2,2,2)$--behaviours.

The other direction follows from Lemma~\ref{lem:extremeQ}: all \emph{extremal} quantum $(2,2,2)$-behaviours can be realised by restricting binary-outcome bipartite {$\SO{d}\!\times\!\SO{d}$-}boxes, transforming fundamentally locally and being locally unbiased, to two possible input directions per party. 
Additional shared randomness allows the two parties to generate all statistical mixtures of these behaviours, yielding all further quantum $(2,2,2)$-behaviours.
\end{proof}

Theorem~\ref{thm:222quantum} cannot hold for all $d\geq 2$ without allowing shared randomness. 
For example, suppose that $d=3$, then the proof of Lemma~\ref{lem:LFLU} shows that all correlations realizable with binary-outcome bipartite {$\SO{3}\!\times\!\SO{3}$-}boxes that transform fundamentally locally and are locally unbiased can be realized via unital positive bilinear forms on the positive semidefinite qubit cone. 
Consequently, the result by \citet{BarnumBBEW10} implies that all these correlations can also be realized via POVMs on ordinary two-qubit quantum state space. 
However, \citet{DonohueW15} (extending results by \citet{PalV09}) have shown that the set of $(2,2,2)$-behaviours realizable on two qubits via POVMs is not convex, and thus not equal to the convex set of quantum $(2,2,2)$-behaviours.

\end{document}